\newcommand{\jac}[1]{D\mkern-0.75mu{#1}}
\theoremstyle{thmstyleone}%
\newtheorem{theorem}{Theorem}%  meant for continuous numbers
\theoremstyle{thmstyletwo}%
\newtheorem{remark}{Remark}%
\theoremstyle{thmstylethree}%
\newtheorem{definition}{Definition}%
\begin{document}

%\title[Neuro-glial networks $\&$ dynamic context-dependent  bandits]{Multi-scale neuro-glia-inspired networks achieve efficient learning in dynamic context-dependent bandits}

%\title{Neuro-glial interactions enrich the repertoire of neural dynamics and enable context-dependent learning}

%\title[Neuro-glial networks $\&$ context-dependent learning]{Neuro-glia-inspired networks achieve enriched dynamics and enable context-dependent learning}

\title[Astrocytes as meta-plasticity  and contextually-guided network function]{Astrocytes as a mechanism for meta-plasticity and contextually-guided network function}

%%=============================================================%%
%% Prefix	-> \pfx{Dr}
%% GivenName	-> \fnm{Joergen W.}
%% Particle	-> \spfx{van der} -> surname prefix
%% FamilyName	-> \sur{Ploeg}
%% Suffix	-> \sfx{IV}
%% NatureName	-> \tanm{Poet Laureate} -> Title after name
%% Degrees	-> \dgr{MSc, PhD}
%% \author*[1,2]{\pfx{Dr} \fnm{Joergen W.} \spfx{van der} \sur{Ploeg} \sfx{IV} \tanm{Poet Laureate} 
%%                 \dgr{MSc, PhD}}\email{iauthor@gmail.com}
%%=============================================================%%

\author*[1]{\fnm{Lulu} \sur{Gong}}\email{glulu@wustl.edu}
\author[2]{\fnm{Fabio} \sur{Pasqualetti}}\email{fabiopas@engr.ucr.edu}
\author[3]{\fnm{Thomas} \sur{Papouin}}\email{thomas.papouin@wustl.edu}
\author*[1]{\fnm{ShiNung} \sur{Ching}}\email{shinung@wustl.edu}
%\equalcont{These authors contributed equally to this work.}

%\author[2,3]{\fnm{Third} \sur{Author}}\email{iiiauthor@gmail.com}
%\equalcont{These authors contributed equally to this work.}

\affil*[1]{\orgdiv{Department of Electrical and Systems Engineering}, \orgname{Washington University in St. Louis}, \orgaddress{\city{St. Louis}, \postcode{63130}, \state{MO}, \country{USA}}}
\affil[2]{\orgdiv{Department of Mechanical Engineering}, \orgname{University of California at Riverside}, \orgaddress{\city{Riverside}, \postcode{92521}, \state{CA}, \country{USA}}}
\affil[3]{\orgdiv{Department of Neuroscience}, \orgname{Washington University in St. Louis}, \orgaddress{\city{St. Louis}, \postcode{63110}, \state{MO}, \country{USA}}}

%\affil[3]{\orgdiv{Department}, \orgname{Organization}, \orgaddress{\street{Street}, \city{City}, \postcode{610101}, \state{State}, \country{Country}}}

%%==================================%%
%% sample for unstructured abstract %%
%%==================================%%

\abstract{
Astrocytes are a ubiquitous and enigmatic type of non-neuronal cell and are found in the brain of all vertebrates.
While traditionally viewed as being supportive of neurons, it is increasingly recognized that astrocytes may play a more direct and active role in brain function and neural computation. 
On account of their sensitivity to a host of
 physiological covariates and ability to modulate neuronal activity and connectivity on slower time scales, astrocytes may be particularly well poised to modulate the
dynamics of neural circuits in functionally salient ways. 
In the current paper, we seek to capture these features via actionable abstractions within computational models of neuron-astrocyte interaction. Specifically, 
we engage how nested feedback loops of neuron-astrocyte interaction, acting over separated time-scales may endow astrocytes with the capability to enable learning in context-dependent settings, where fluctuations in task parameters may occur much more slowly than within-task requirements. 
We pose a general model of neuron-synapse-astrocyte interaction and use formal analysis to characterize how astrocytic modulation may constitute a form of meta-plasticity, altering the ways in which synapses and neurons adapt as a function of time. We then embed this model in a bandit-based reinforcement learning task environment, and show how the presence of time-scale separated astrocytic modulation enables learning over multiple fluctuating contexts. Indeed, these networks learn far more reliably versus dynamically homogeneous networks and conventional non-network-based bandit algorithms. Our results indicate how the presence of neuron-astrocyte interaction in the brain may benefit learning over different time-scales and the conveyance of task-relevant contextual information onto circuit dynamics.
}

\keywords{Neuron-astrocyte interactions,  multi-scale brain dynamics, multi-armed bandits, context-dependent learning.}

\maketitle

\section{Introduction}\label{sec1}

The role of non-neuronal cells (glia) in neural computation has been the topic of increasing interest over the past decade. In the mammalian brain, glia comprise a significant proportion of all cells, comparable to that of neurons.
However, their functional role has traditionally been viewed as one of maintaining the basic physiological needs of neurons \cite{Halassa2007,PEREA2009421,farhy2018astrocytes}. This view has now been repeatedly challenged owing to a decades-long stream of evidence that these cells directly modulate neuronal signaling \cite{murphy2022contextual, nagai2021behaviorally}. The simple premise here is that the computational power of the brain should be conferred by all cells collectively populating the brain, and not merely by neuronal activity \cite{shepherd2018handbook,nagai2021behaviorally,smith1992astrocytes,murphy2022contextual}. That is, the effects of glia on brain function and neuronal activity exist, and hence must matter. This notion opens up richer and more expansive hypotheses regarding the mechanisms underlying brain computation, including ways by which neuromodulation of networks may be achieved and mapped to function.

In the current work, we zero our attention on astrocytes, the most abundant type of non-neuronal cell. Collective work in the field of astrocyte biology has repeatedly provided evidence on the instrumental role of astrocytes in controlling neuronal functions such as synaptic wiring, synaptic activity, synaptic memory, and neuronal excitability \cite{robin2018astroglial,papouin2017septal,henneberger2010long,ma2016neuromodulators,Requie2022,Noh2023,Yuniesky2023,cui2018astroglial,poskanzer2016astrocytes}, reflecting the potential of astrocytes to control key computational loci in the brain. However, directly probing the role of astrocytes in brain computation has been virtually impossible at the experimental level owing to limited knowledge surrounding their rules of engagement and signaling mechanisms, combined with their non-binary rules of ‘excitability’, the multiplex nature of their outputs (i.e. a single astrocyte is capable of a multitude of inhibitory, excitatory or modulatory outputs, contrary to neurons), and the incomplete toolkit available to manipulate them. On the contrary, computational neuroscience provides an ideal playground to probe the role of astrocytes in circuit computation by way of mathematical and algorithmic modeling. On the other hand, the absence of a consensus framework on how to conceptualize astrocyte's contribution to brain computation in a reductionist way has made it difficult to meaningfully abstract astrocyte functions in computational models.
Interestingly, a new hypothesis called ``contextual guidance” was recently introduced that potentially alleviates these issues \cite{murphy2022contextual}.
It posits that astrocytes act as a contextual switchboard that actively conveys information about the environment and physiological state of the organism to neuronal networks. As a result, astrocytes may be a potential active player in mediating neural computation and function. Conversely, accounting for astrocytes, and glia more generally, in neural computation theory may close gaps in how neural circuits learn and implement functions in a manner sensitive to context. For example, an extant issue in theoretical neuroscience pertains to how different but functionally overlapping tasks may be embedded in a single neuronal circuit \cite{driscoll2022flexible,cole2013multi, yang2019task}. Such a scenario would seemingly require mechanisms by which different neuronal dynamical regimes may be learned and then recruited, in a context/task-dependent fashion.
The goal of this paper is thus to introduce computational modeling and analysis to probe how astrocytes may enrich the computational capability of neural circuits toward such objectives.

Astrocytes contain distinct physiological features relative to neurons. They have slow time-scales of activation, on the order or seconds or slower. This latter fact makes them easy to dismiss from the perspective of fast computation. However, these slow time-scales may in fact be a computationally-relevant feature when combined with their uniquely broad spatial scale. A single astrocyte can impinge on hundreds of neurons and synapses.
Indeed, neural network function is often viewed through the lens of synaptic connectivity, wherein specific synaptic `weight' configurations are associated with different tasks \cite{Ahmadian2015,Rivkind2017,Mastrogiuseppe2018,Schuessler2020}. But, surprisingly, the ability of astrocytes to modulate synaptic weight via both Hebbian and non-Hebbian mechanisms is generally unsuspected or disregarded. By providing a mechanism to slowly modulate neural dynamics and synaptic interactions, astrocytes may thus enable functional adaptation according to salient environmental signals or internal circumstances.

Such a framework would represent a shift from common conceptualizations of neural computation that rely on homogeneous neural units, and thus explain how information processing mechanisms may be enacted over different spatial and temporal scales. This, in turn, may better reconcile models of algorithmic learning with the physiological realities of the brain. In fact, recent work has argued that astrocytes may implement a transformer-like model of attention in multi-task adaptation and learning in feedforward architectures \cite{kozachkov2023building}. In \cite{Maurizio2022}, it is shown that neuro-glial interactions can lead in turn to distinct patterns of neural activity in working memory tasks through mean-field network model analyses.  In the current paper, we focus our attention on the \textit{dynamics} of neuron-astrocyte interactions in recurrent network and learning scenarios. The correlation between network dynamics, e.g., vector fields, attractors, etc., and different functions is itself a crucial area of study in theoretical neuroscience \cite{khona2022attractor}.
Furthermore, there is recognition that leveraging the multiple time-scales and heterogeneous structures of recurrent neural networks to design models for learning multiple, sequential, and temporal tasks \cite{Yin2020,Kurikawa2021,Kurikawa20211,Kurikawa20212}. As such, adding astrocytes to traditional recurrent neural network architectures could thus further expand the expressiveness of these networks \cite{Wade2011,Gordleeva2021,Tsybina2022}.  Yet, there remains a considerable gap in our understanding of the dynamics of neuron-astrocyte interactions and how such dynamics may map onto learning and function.

Motivated by the above, our goal is twofold. First, we seek to develop and study a simplified dynamical systems model of neuron-astrocyte interaction in order to gain fundamental insight into how the time- and spatial-scale separation between astrocytes and neurons may enrich the repertoire of neural dynamics and activity. Second, we seek to understand how astrocyte-enriched dynamics may enable learning over disparate time-scales and in context-dependent task scenarios, consistent with the contextual guidance hypothesis outlined above. For the latter, we choose to focus on decision-making problems and reinforcement learning (RL) scenarios, given their relevance and ubiquity in algorithmic learning and prior observations that astrocytes can participate in the encoding of reward information \cite{Doron2022,kang2023astrocyte}.

We proceed to formulate a novel bio-inspired model of neuron-astrocyte interactions, and then embed this model in algorithmic optimization frameworks to solve context-dependent bandit tasks. Our major contributions include the dynamical systems analysis of this model, and understanding glial modulation as a pseudo-bifurcation parameter that can switch neural and synaptic dynamics between different dynamical regimes as a form of meta-plasticity. Here, we use the latter term in a slightly broader sense than traditional definitions \cite{abraham2008metaplasticity}, such that astrocytes will form a `second-order' modulation on the processes that govern neuronal activity as well as the synaptic plasticity rules themselves, resulting in a net re-shaping of the time-evolution of synaptic weights.  We furthermore show that the structure and time-scale separation of astrocytes relative to neurons is enabling in terms of learning non-stationary bandit problems, exceeding the learning performance of well-established algorithms in this domain.

\section{Results}\label{sec2}

\subsection{Neuro-glial interactions constitute a hypernetwork with multi-scale dynamics}\label{subsec21}

We proceed to develop a reduced model of neuron-astrocyte interaction that captures key aspects of neurobiology while enabling fundamental analysis regarding dynamical expressiveness and links to function.

\subsubsection{Neuro-glial structure as a hypernetwork}\label{subsec211}
Classically, biological interactions between neurons, astrocytes, and synapses have been conceptualized in terms of the \emph{tripartite synapse} structure \cite{arizono2020structural,salmon2023organizing} (as shown in Figure  \ref{fig:tripartitesynapse}A). 
Within this framework, astrocytes interact with neurons at synapses, modulating synaptic efficacy \cite{de2023specialized} and controlling synaptic plasticity \cite{panatier2006glia}. Such interactions may occur in a higher-order and `closed-loop' fashion, wherein astrocytes respond to neurotransmitters released during pre- and post-synaptic neuronal activity (see \hyperref[secsi-signalflow]{SI.1}  for detailed description) and this has been the mainstream assumption in past work attempting to model astrocytes. While this description may capture an important dimension of neuron-astrocyte interaction, it is increasingly clear that astrocytic modulation of neuronal activity is more general and multifaceted. The \textit{contextual guidance hypothesis} \cite{murphy2022contextual} espouses that astrocytes not only regulate synaptic activity, but may actively convey exogenous inputs onto said processes. Such inputs may convey contextual information important for circuit function, such as vigilance state, sensory salience, metabolic load, or underlying pathology. As such, astrocytes may actively `control' neural dynamics in a state-dependent manner. These effects may occur not only at the synapse via the release of astrocyte-derived neuroactive transmitters but also at cell bodies via the alteration of ionic conditions, notably potassium levels \cite{PEREA2009421,cui2018astroglial} (Figure \ref{fig:tripartitesynapse}A).

\begin{figure}[t]
    \centering
\includegraphics[width=11cm]{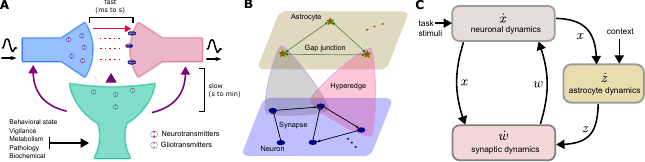}
    \caption{A. In a tripartite synapse, the presynaptic axon and postsynaptic dendrite are surrounded by an astrocyte \cite{baldwin2023astrocyte}, enabling multifaceted effects of neurotransmitters and gliotransimitters. B. A graphical illustration of the neuro-glial hypernetwork: the circles and stars represent neurons and astrocytes respectively; the colored triangles denote the hyperedges and represent the multiplexed intralayer interactions. C. Schematic representation of the feedback interconnections between subsystems in the multi-scale neuro-glial network model.}
    \label{fig:tripartitesynapse}
\end{figure}
The above schema of neuron-astrocyte interactions is difficult to capture as a traditional graphical network representation. As a result, we introduce the framework of a hypernetwork to describe the neuro-glia architecture (see Figure \ref{fig:tripartitesynapse}B for the illustration and detailed description in \hyperref[si_graphical_hypernetwork]{SI.2}).  
We distinguish neurons and astrocytes by representing them on two different layers of the network.  The interlayer relationships are all hyperedges, which embody the ability of astrocytes to modulate neuronal activity at synapses and cell bodies.

\subsubsection{Multi-scale neuronal and astrocytic dynamics}\label{subsec212}
The hypernetwork formulation alone does not capture the full complexity of neuron-astrocyte interaction, as it does not explicitly contain information about the time-scales and dynamics of neuronal and astrocyte activation. 
For this, we introduce a set of ordinary differential equations (ODEs) overlaying the hypernetwork:
\begin{subequations}\label{neuroglialmodel}
    \begin{align}
    &\tau_n\dot{x}_i=-a_ix_i+\sum_{j=1}^nw_{ij}\phi(x_j)+u_i, \quad i=1,...,n, \label{neuroglialmodel1}\\
    &\tau_w\dot{w}_{ij}=-b_{ij}w_{ij}+c_{ij}\phi(x_i)\phi(x_j)+d_{ij}\psi(z_k), \quad i,j=1,...,n, \label{neuroglialmodel2}\\
    &\tau_a\dot{z}_k=-e_kz_k+\sum_{l=1}^{m}f_{kl}\psi(z_l)+g_k,\quad k=1,...,m, \label{neuroglialmodel3} \\
    & g_k = h_k\phi(x_i)\phi(x_j)+v_k. \label{neuroglialmodel4}
    \end{align}
\end{subequations}
These dynamical equations are based on firing rate descriptions of neural activity (see \hyperref[sec4]{Methods} for modeling details). 
Here, $x_i$ describes the rate of the neuron $i=1,...,n$, $w_{ij}$ is the weight of the synapse (i.e., the synaptic efficacy)  between neurons $i$ and $j$, and $z_k$ represents the activity (abstracted from Calcium activity) of astrocytes $k=1,...,m$.  Here we emphasize that $z_k$ embeds a graded but non-linear transformation between the inputs to astrocytes and their output onto neurons.  There exist many models for describing the dynamics of neurons, and the one we use is, in essence, a continuous-time rate-based recurrent neural network (RNN) \cite{funahashi1993approximation}. For the edge weights between neurons, we prescribe a Hebbian plasticity rule wherein weight changes are dependent on the correlation $\phi(x_i)\phi(x_j)$. The signal $u_i$ conveys external inputs onto neural dynamics.

To distinguish astrocytes from neurons, we use a different activation function (i.e., $\psi(\cdot) \neq \phi(\cdot)$) and, most crucially, will assume that the time-scale $\tau_a$ is slower than that of neurons. Specifically,  a larger value of $\tau_n$, $\tau_w$, and $\tau_a$ implies a slower rate of time-evolution \cite{kuehn2015multiple} of the associated activity variables. Thus, the multiple time-scale feature of neural-glial processes is readily captured in equations \eqref{neuroglialmodel}, with a suitable choice of the values of these parameters. 
Completing the model, $f_{kl}$ denotes interactions between astrocyte $l$ and $k$, allowing for potential gap junctions-mediated communication between neighboring astrocytes \cite{pannasch2012astroglial}.  An important feature of the model is that astrocytes may be sensitive to contextual information in accordance with \cite{murphy2022contextual}, via $c_k$.  Here, we postulate two forms of context as specified in \eqref{neuroglialmodel4}. First, we consider a `circuit' context, such that the astrocyte may have a sensitivity of second-order neuronal activity via the coefficient $h_k$. Second, we formulate an external context, motivated by the contextual guidance hypothesis, conveyed by the exogenous `contextual signal' $v_k$. Such a signal may originate, for example, from the sensory periphery. The neuronal exogenous input $u_i$ may also contain such contextual information.

The model above attempts to balance expressiveness, interpretability, and tractability.  In particular, we have not fully captured the spatial scale distinctions of astrocytes relative to neurons here, since we restrict ourselves to only the case of two neurons within the domain of a single astrocyte. We have, however, captured several important features of the astrocytic contextual guidance hypotheses: (i) the presence of multiple, nested loops of feedback between neurons and astrocytes, providing a diversity of mechanisms by which contexts can propagate through astrocytes and affect neuronal activity, and (ii) the potentially orders-of-magnitude separation in time-scales between neuronal activity and astrocytic modulation thereof. 
Unlike previous abstractions such as \cite{Maurizio2022}, we do not assume spike-like dynamics within astrocytes.  In total, our astrocytes: (a) produce slow, graded activity (as a surrogate for $Ca^{2+}$) that (b) modulates neuronal excitability and synaptic plasticity and (c) is responsive to circuit and external context via feedforward and feedback signaling paths.
It is of note that the above neuro-glial model is well-behaved from a dynamical systems perspective since solutions exist, are unique, and are restricted to a bounded subspace (see \hyperref[secsi1]{SI.3}).

From a systems-level perspective, the dynamics of the neuro-glial network can be understood as the interaction between three subsystems, forming two closed-loops as shown in Figure \ref{fig:tripartitesynapse}C. The first closed-loop consists of the subsystem of neurons (\ref{neuroglialmodel1}) and synapses (\ref{neuroglialmodel2}). The second closed-loop involves the subsystem of astrocytes (\ref{neuroglialmodel3}), which transfers information from neurons to synapses.  By forming these closed-loops, the astrocytic process not only directly modulates synaptic plasticity based on neural activity but also indirectly modifies synaptic connections, shaping the dynamics of the network as a whole. This mechanism can facilitate the formation and evolution of attractors (e.g., fixed points) in the neural subsystem state space, as elaborated below.

\subsubsection{Glial modulation acts as a pseudo-bifurcation parameter that enables meta-plasticity and rapid changes in circuit dynamics}\label{subsec22}
To analyze the dynamics of \eqref{neuroglialmodel}, we reduce it to its simplest motif, i.e., the interaction of two neurons and a single astrocyte.  
Here, we assume that the neurons form a reciprocal excitatory-inhibitory loop, itself a common canonical motif for cortical interactions between pyramidal and inter-neurons.
\begin{figure}[htbp!]
    \centering
\includegraphics[width=10cm]{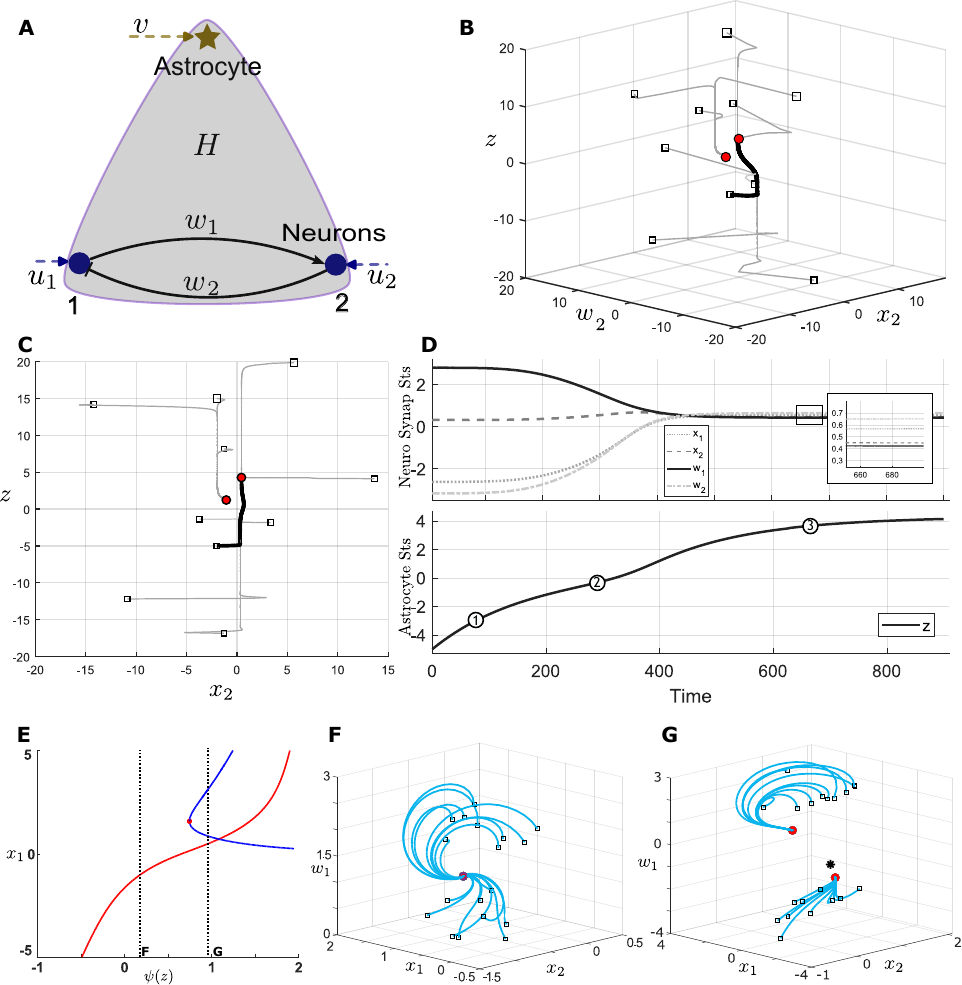}
 \caption{Neuro-glial network motifs and dynamic properties. A. Graphical representation of the network motif;
 $u_1$, $u_2$, $v$ include inputs from other nodes of the hypernetwork as well as those from external sources. B, C. Several examples of phase curves in the state space $(x_2,w_2,z)$ of the network motif system. The parameter conditions are $a_1=0.7$,
$a_2=0.6$,
$b_1=1.6$,
$b_2=1.7$,
$c_1=12$,
$c_2=-10$,
$d_1=-4$,
$d_2=5$,
$e=0.6$, $h=6$, and $\tau_1=\tau_2=0.01$, $\tau_3=1$.
 The system has three fixed point points, of which two are stable (red dots). The system dynamics converge to these two stable fixed points.  D. Trajectory associated with the thick phase curve from B, C. illustrating two stationary regimes (indicated by 1 and 3 in the figure). E. depicts the bifurcation diagram of the neural dynamics with respect to the astrocyte output $\psi(z)$, where the red curve shows that one branch of fixed point always exists, while the blue curve shows how the other branch of fixed points changes via the saddle-node bifurcation. F, G. Vector fields of the neuronal-synaptic dynamics to either side of the saddle-node bifurcation.}
\label{fig:networkmotif}
\end{figure}
From \eqref{neuroglialmodel}, the neuron-astrocyte motif amounts to a set of $5$ ODEs:
\begin{equation}\label{networkmotifdyanmics}
 \begin{aligned}
 &\tau_1 \dot{x}_1=-a_1x_1+w_2\phi(x_2)+u_1(t)\\
 &\tau_1 \dot{x}_2=-a_2x_2+w_1\phi(x_1)+u_2(t)\\
 &\tau_2 \dot{w}_{1}=-b_{1}w_{1}+c_{1}\phi(x_1)\phi(x_2)+d_{1}\psi(z)\\
 &\tau_2\dot{w}_{2}=-b_{2}w_{2}+c_{2}\phi(x_1)\phi(x_2)+d_{2}\psi(z)\\
 &\tau_3 \dot{z}=-ez+h\phi(x_1)\phi(x_2)+v(t).
 \end{aligned} 
\end{equation}
The dynamics of this system are asymptotically bounded  (see \hyperref[secsi2]{SI.4}). 
Within this bounded set, the motif may exhibit a unique fixed point, or multiple fixed points, depending on parameterization. 
Figure \ref{fig:networkmotif}B,C shows the case of three fixed points under the assumption that astrocytes evolve at a time-scale two orders of magnitude slower than neurons and synapses (i.e., $\tau_3 = 100 \tau_1, \tau_2$). 
Figure \ref{fig:networkmotif}D illustrates the time evolution of a specific trajectory within this landscape. 
As expected, $z$ evolves much slower than the other variables. Notably, this slowly-changing astrocytic activity variable seems to drive neural variables to transit between nearly stationary regimes, suggesting that astrocytes can systematically `control' stationary neural activity.

In order to understand this phenomenon in more detail, we performed a singular perturbation analysis (see \hyperref[secsi3]{SI.5})  to better clarify the mechanisms by which astrocyte signals may be modulating neural dynamics. 
This analysis treats the astrocyte state as a fixed parameter, premised on its relatively slow evolution relative to the neural dynamics.  
We can then study how this parameter affects the vector field and attractor landscape of the neural subsystem.  
Figure \ref{fig:networkmotif}E provides the pseudo-bifurcation diagram of the above motif by showing the position of the fixed points in the $x_1$-dimension as a function of the $\psi(z)$. When $\psi(z)$ is small, there is only one fixed point (the red line). When $\psi(z)$ is large, the neural subsystem manifests three fixed points by means of a saddle-node bifurcation. 
In other words, at the bifurcation point, there is a fundamental change in the shape of the neuronal-synaptic vector field and hence dynamics. Thus, astrocytic modulation can drastically alter the flow of neuronal and synaptic activity as a function of time.  We hypothesize this mechanism may be particularly powerful for the contextual guidance premise as it may enable astrocytes to reshape the dynamics of synaptic adaptation and hence neural computation, based on exogenous contextual signals, e.g., via $v(t)$. Thus, astrocytes form, in essence, a pathway for context-guided meta-plasticity and targeted neuromodulation. Below, we probe this hypothesis within the reinforcement learning task paradigm.

\subsection{Neuro-glial networks are able to learn context-dependent decision-making problems}\label{subsec23}
We apply the proposed multi-scale neuro-glial network model to context-dependent decision-making problems.
We focus specifically on multi-armed bandits (MABs), a well-known class of reinforcement learning problems, wherein an agent aims to maximize its cumulative reward over time by selecting actions (arms) from a set of available options \cite{sutton2018reinforcement}. MABs find applications in various domains, including recommendation systems, clinical trials, and cognitive tasks in neuroscience, as they provide a powerful framework for decision-making under uncertainty \cite{slivkins2019introduction}.  While well-studied, this class of problems nonetheless poses persistent challenges when environments are non-stationary. Our prevailing hypothesis is that the disparate time-scales of signaling emanating from astrocytes can enable learning in such settings.

A standard MAB assumes a constant environment, in which the probabilities of reward associated with different arms are stationary. Our goal, however, is to study the capacity of our proposed neuro-glial networks, by virtue of their time-scale separation, to learn in non-stationary and/or context-dependent settings. Thus, we designed both stationary and non-stationary Bernoulli bandit environments (see Figure \ref{fig:main_result1} and Multi-armed bandit tasks in \hyperref[sec4]{Methods}) within which to evaluate learning efficacy.

\subsubsection{Learning metrics}\label{subsubsec231}
In MABs, a common figure of merit is the (pseudo) cumulative regret, which is defined specifically in Bernoulli bandits by 
\begin{equation}\label{regretdefinition1}
    R_T= \sum_{t=1}^{T} (\max_{a_i\in \mathcal{A}} \mu_i -\mathbb{E}[r_{t}]),
\end{equation}
where $T$ is the total rounds, $\mu_i$ is the mean of the action $a_i$, which belongs to the action set $\mathcal{A}$, and $r_{t}$ is the reward derived by the agent at trial $t$ with $\mathbb{E}[\cdot]$ denoting the expected value.
A lower value of \eqref{regretdefinition1} indicates less accumulated loss and equivalently higher accumulated reward. Additionally, we consider the convergence speed of the algorithm, which measures the time taken by the agent for $R_T$ to reach an optimal value. Faster convergence is generally desirable as it signifies more efficient learning by the algorithm.

\subsubsection{Learning algorithm architecture}\label{subsubsec232}
In order to evaluate the proposed neuro-glial model in these tasks, we require a learning/optimization method. For this purpose, we make several implementation assumptions.  First, we assume that the network emits an output via a softmax operation, a typical form of network readout in neural network architectures. Second, we assume that networks have access to a signal that contains information about the environmental context (e.g., a change in arm probabilities, without overtly specifying the probabilities themselves). 
Upon this architecture, we deploy a reinforcement learning method to optimize all parameters of the model (see \hyperref[sec4]{Methods}).
 The architecture of our learning algorithm is depicted in Figure \ref{fig:main_result1}. 
Briefly, during a typical learning episode, the network outputs a policy for action selection, i.e., a probability distribution over the possible actions (at the output of the softmax). The bandit environment provides a reward to the agent in response, which is then fed into an analytical loss function, for which a gradient can be defined and hence network parameters updated.
Crucially, this learning paradigm is agnostic to the specific network being learned, i.e., we can train vanilla RNNs and other architectures with the exact same methodology. This will allow us to make direct comparisons between the proposed neuro-glial network and other standard neural networks.

\begin{figure}[t]
    \centering
\includegraphics[width=11cm]{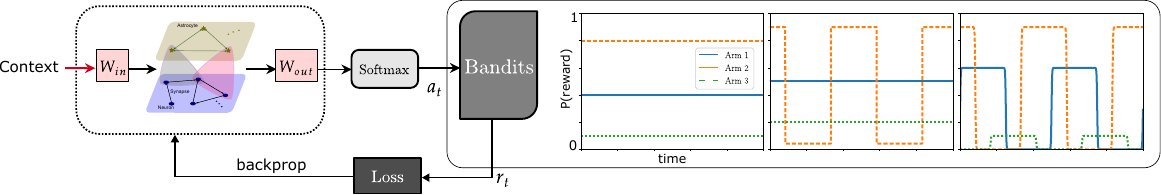}
    \caption{Architecture of the learning algorithm. The three plots on the right represent a stationary Bernoulli bandit scenario where the arm means remain  $(0.4, 0.8, 0.1)$ constantly over time,  a flip-flop non-stationary Bernoulli bandit where Arm 2's mean alternates between $0.92$ and $0.042$, and a smooth-change non-stationary Bernoulli bandit where all arm means change according to a smooth periodic function, respectively.
The left figure shows the architecture of the learning algorithm. }
    \label{fig:main_result1}
\end{figure}

\subsubsection{Performance comparison}\label{subsubsec233}
We conducted a comprehensive learning performance analysis of the proposed neuro-glial network in comparison to other neural network architectures (vanilla RNN, LSTM, GRU), all trained the same way using the above method. In addition, we also deployed traditional algorithms for solving bandit problems, the  Upper Confidence Bound (UCB) and Thompson Sampling (TS) methods. 
The specific learning procedures for all neural network-based methods are similar, as described in Section \ref{subsubsec232}.

\paragraph{Stationary case.} Figure \ref{fig:main_result1}E,F illustrates the comparison of the learning performance of 
 different methods (Neuro-glial, LSTM, TS, vRNN, GRU, UCB) in a stationary bandit task with arm probability settings of $(0.4, 0.8, 0.1)$. Each method requires exploration of the environment, resulting in high regret during the initial time steps. However, all methods eventually converge with comparable rates and cumulative regret of the same order of magnitude. In particular, the neuro-glial architecture performs similarly to the other network-based implementations in this case. Single-run simulation results show that the neuro-glial method uses less time to converge (see  \hyperref[si_regrets_time]{SI.6.1}). In addition, this method tends to be robust as the tasks become more challenging due to the small distance between arm probabilities (see  \hyperref[secsirobust]{SI.6.2}).

\paragraph{Non-stationary case.}
However, in the presence of non-stationary, the neuro-glial architecture displays significant gains in capability. 
Indeed, these networks can achieve almost stationary regrets over time as shown in Figure \ref{fig:main_result_2}B.  In contrast, other methods consistently result in escalating regrets. It is important to emphasize again that the setup for learning here is identical across all networks. 
These results are consistent across different non-stationary scenarios (see  \hyperref[si-detail_non-stationary]{SI.6.3}). In addition, similar learning performance is observed in scenarios with a different number of actions (see  \hyperref[secsi-8actions]{SI.6.5}). 
These observations suggest that the neuro-glial network is able to leverage contextual signals and adapt its actions to the changing environment.

\begin{figure}[t]
    \centering
\includegraphics[width=10cm]{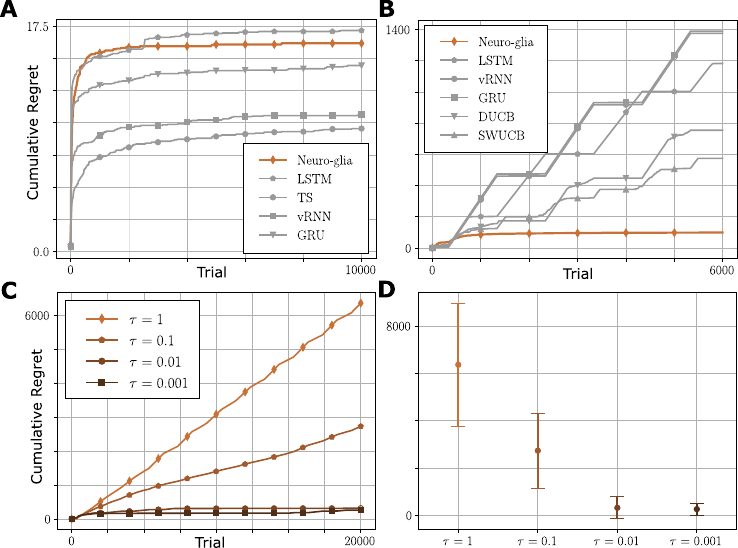}
    \caption{Learning performance. Performance comparison of the neuro-glial method relative to other learning methods for A. stationary and B. flip-flop bandit environments. C,D. neuro-glial learning performance for different time-scale separation.}
\label{fig:main_result_2}
\end{figure}

\subsection{Time-scale separation is necessary for context-dependent learning}\label{sec235}

In order to probe the mechanisms by which the neuro-glial network achieves context-dependent learning, we first focus on the time-scale separation between neurons and astrocytes. In our analysis above, we showed how astrocytic modulation may function, in essence, as a form of meta-plasticity wherein the time-scale separation enabled pseudo-bifurcations that could allow neuronal dynamics to traverse different functional regimes.
The question at hand is whether this mechanism confers utility for context-dependent learning. 
To assess this, we varied the time-scale separation (via $\tau$) between astrocytes and neurons in our network, to probe the impact of this feature on learning performance.
As shown in Figure  \ref{fig:main_result_2}C (also \hyperref[secsi_timescale]{SI.6.4}), different $\tau$ have significant impacts on learning performance, to the extent that without time-scale separation learning simply does not occur. This is seen for the case $\tau=1$, in which astrocytes and neurons have the same time-scale. Here, the cumulative regret does not converge. When $\tau=0.1$, the agent can sometimes achieve stationary asymptotic cumulative regret. This learning performance improves for greater time-scale separation. For $\tau \leq 0.01$ (that is, time scale separation greater than $2$ orders of magnitude), the agent can always adapt to the environment. Moreover, for greater time-scale separation with smaller values of $\tau$, there is less variability in the asymptotic regret (see Figure \ref{fig:main_result_2}D).

To understand the mechanism underlying this effect, we more closely examined the learning dynamics of individual model instances over the different $\tau$ values,  especially the $\tau=1$ and $\tau = 0.01$ cases.  As shown in Figure \ref{fig:trained_network_analysis}A, in the case of $\tau=1$, the network is able to learn solutions in each context; however, upon switching, regret again accumulates, indicating an overwriting of prior strategies as comparable to the phenomenon of catastrophic forgetting. On the other hand, neuro-glial networks with time-scale separation are able to reliably learn the flip-flop bandit, indicating that they are able to gradually associate the contextual information with the environment and protect previously learned trajectories. As shown in Figure \ref{fig:trained_network_analysis}B, the astrocyte-mediated meta-plasticity appears to be engaged during the process of learning. Specifically, we projected the trial-wise network activity along population vectors associated with astrocytes ($PC_z$) and synaptic weights ($PC_w$). We observed that during learning, the network forms distinct synaptic trajectories that asymptotically approach a fixed weight configuration. The time-scale separation between astrocytic and synaptic activation is apparent when tracing the initial stages of the trajectories. Furthermore, the astrocyte output is less sensitive overall to learning, likely an important factor in preventing the context-wise overwriting of prior dynamics (see also \hyperref[sec3]{Discussion}). 
\begin{figure}[t]
   \centering
\includegraphics[width=10cm]{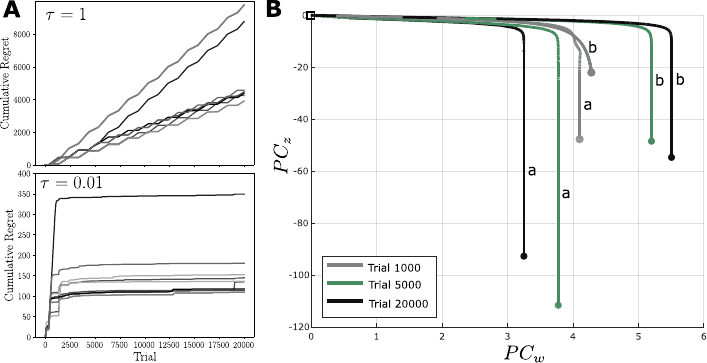}
   \caption{A. Single learning traces for $\tau=1$ and $\tau=0.01$, highlighting the role of time-scale separation in enabling RL over contexts. B. Astrocyte and synaptic activity projections for both contexts (indicated as $a$ and $b$) in early, middle, and late phases of learning, highlighting the formation of distinct synaptic weight trajectories.}
   \label{fig:trained_network_analysis}
\end{figure}

\section{Discussion}\label{sec3}

\subsection{Toward a fuller accounting of brain circuit dynamics}
In this paper, we have examined the potential role of neuro-glial interactions in context-dependent learning, with a specific focus on reinforcement-based bandit problems. We began by forming a simplified model of such interactions in the form of a dynamical system, leveraging canonical descriptions of neural firing rate activity and several abstractions of astrocytic activity and modulation that are based on extant neurobiological theory. In particular, we simplified the dynamical description of astrocytes and focused on two key aspects: (i) their orders-of-magnitude time-scale separation from neurons, and (ii) their modulation of synaptic processes. 
Our goal was to understand whether these aspects of neuro-glial interaction, which are known to exist in the brain, matter for network computation.

\subsection{Contextually-guided meta-plasticity}
From this perspective, our analysis indicates the potential for astrocytes to reshape neural and synaptic vector fields in quite significant ways, such as in the formation of multiple stationary regimes of activation and changing the geometry of synaptic weight evolution. Perhaps most notably, astrocytes can modify the dynamics of synaptic plasticity, effectively switching the network between slow and fast weight adaption regimes. This forms a powerful mechanism by which astrocytes can use external and internal contextual information \cite{murphy2022contextual} to shift networks between different modes of learning, which we view as a form of meta-plasticity in the wide sense of that term. 

One important assumption we have made in this work is the use of a contextual signal that is accessible by astrocytes and neurons, with the premise that such a signal may embed task-relevant information and/or other circuit contexts, which is highly consistent with the body or work showing astrocytes ability to detect and respond to functionally salient physiological covariates such as neuromodulators (e.g., dopamine), hormones (e.g., glucocorticoids), or local cytokines. Our abstraction of this signal may be viewed as overly strong, insofar as it presents `clean' context information to the network. From this perspective, we emphasize that all our alternative architectures, and especially the neuro-glial model without time-scale separation, had access to this information. Thus, it is not merely the presence of contextual signaling that enables learning, but the specific dynamical mechanisms by which this information alters neuronal and synaptic dynamics that augment the learning performance.

\subsection{Astrocytic activity as a stabilizer of catastrophic forgetting}
Catastrophic forgetting is a phenomenon in artificial neural networks that arises when networks are tasked with learning multiple tasks sequentially \cite{kirkpatrick2017overcoming}. In this scenario, it often is the case that previously encountered tasks are `overwritten'  when the algorithmic optimization (i.e., learning) strategies are deployed to update the network parameters/weights to meet new task demands. Our results indicate that astrocytic modulation of neuronal and synaptic dynamics may mitigate catastrophic forgetting.  Here, we believe that the slow time-scale of astrocytes is instrumental in protecting previously learned network outputs upon encountering of a new context. As described above, the slow activation of astrocytes makes them generally less sensitive to parametric adjustment relative to neurons and synapses. Thus, their effects are more stable context-to-context. Furthermore, as we have seen, astrocytes have the effect of controlling neuronal and synaptic dynamics, such that those faster processes can occupy distinct regions of state space depending on astrocytic modulation. The combination of these two phenomena means that astrocytes can effectively insulate the learned trajectories/dynamics of one context from overwriting when learning is engaged for a subsequent context. These findings underscore the importance of dynamical heterogeneity in the brain and support the functional advantages that astrocytes may confer.

%\subsection{Predictions}
%While abstracted, our models retain sufficient interpretability to yield predictions that could inform the design of neuroscience experiments. 
Clearly, an important next step for these models will be to validate them with appropriate experiments. As mentioned in the Introductions, tools for \textit{in vivo} study of astrocyte function have lagged relative to those for neurons. For those tools that do exist, e.g., to disrupt astrocyte function \cite{shen2021chemogenetic}, studies such as ours can identify salient behavioral paradigms within which experiments may be conducted.
Most directly, the model suggests that astrocytes contribute to learning in context-dependent or, potentially, multi-task settings. %There has been considerable effort directed at the development of new molecular tools to disrupt astrocyte function \textit{in vivo} \cite{shen2021chemogenetic}. 
One can easily imagine that these tools will soon be deployed to test more formally the role of astrocytes such scenarios. For example, by examining the learning efficacy of rodents engaging multi-arm bandit paradigms \cite{ohta2021asymmetric}.

\subsection{Insights into algorithmic learning systems}
While our goal in this paper has been to explore new theories regarding the potential significance of neuro-glial interactions in the brain, it is nonetheless interesting to consider the implications of these results in the domain of algorithmic systems. We have already commented on the fact that traditional algorithmic methods of learning bandit tasks have difficulty in context-dependent settings, even in the presence of informative signaling. This begs the question of whether neuro-glial type architectures may have utility beyond the bandit/reinforcement learning settings. 

In this regard, there certainly exist recurrent neural networks designed to deal with multiple time-scale features, notably LSTMs \cite{hochreiter1996lstm} and hierarchical RNNs \cite{chung2016hierarchical}. The LSTM has an internal memory cell state that enables it to deal with tasks that involve long-term dependencies. In hierarchical RNNs, multiple layers of RNNs are stacked on top of each other, where each layer captures information at a different level of temporal abstraction. The lower layers focus on short-term dependencies, while the higher layers focus on longer-term dependencies. The multi-scale neuro-glial network considered here is in the form of a feedback-connected multi-layered network with different embedded time-scales, and hence may blend the different features of these extant machine learning architectures. It is thus possible that this framework may be extendable to other machine learning domains, especially ones involving disparate time scale requirements such as meta-learning \cite{hochreiter2001learning,wang2016learning,wang2018prefrontal}.

\section{Methods}\label{sec4}

\textbf{Multi-scale modeling of neuro-glial network dynamics}\\
In general, neural dynamics can be described by recurrent neural network models. Here, we consider the biology-inspired continuous-time RNN (CTRNN) \cite{funahashi1993approximation, beer1995dynamics}. Consider a group of $n$ neurons where each neuron is connected to some other neurons via synapses. Let $x_i \in \mathbb{R}$ be the state of the unit $i$, which denotes the mean membrane potential of the neuron. Then, the model of CTRNN is defined by ODEs
\begin{equation}\label{CTRNN}
    \tau_n\dot{x}_i=-a_ix_i+\sum_{j=1}^nw_{ij}\phi(x_j)+u_i, \quad i=1,...,n,
\end{equation}
where $\tau_n>0$ and $a_i>0$ are the time constant and decaying parameter respectively, and $u_i$ is the external input to unit $i$. $\phi(x_j)$ is the activation function. It is noted that each unit $i$ collects the outputs $\phi(x_j)$ (i.e., short-term average firing frequency) from all the connected neural units in the network, weighted with the synaptic connection coefficients $w_{ij}\in \mathbb{R}$, where the positive or negative $w_{ij}$ indicates an excitatory or inhibitory synapse respectively.

Synapses are capable of modifying their strength via synaptic plasticity, which is
usually formulated as a learning rule where the change of a synaptic strength $w_{ij}$ depends on the correlation between the firing rate of a  presynaptic neuron $j$ and the firing rate of the postsynaptic neuron $i$. We consider the Hebbian learning rule: the weight between two neurons strengthens when they are correlated, and weakens otherwise. This rule is defined mathematically by the equation \cite{Gerstner2002}
\begin{equation}\label{hebb}
\tau_w\dot{w}_{ij}=-w_{ij}+c_{ij}\phi(x_i)\phi(x_j),
\end{equation}
where $b_{ij}>0$ is the decaying parameter; $\tau_w>0$ is the time constant; $c_{ij}\in \mathbb{R}$ is a parameter which indicates an existing synaptic connection when it is non-zero.  When $c_{ij}$ takes a positive value, \eqref{hebb} is called the \emph{Hebbian learning}, and the case with $c_{ij}<0$ is \emph{anti-Hebbian learning}. 

Similar to neurons, astrocytes can establish network connections within the central nervous system through gap junctions \cite{pannasch2011astroglial,pannasch2012astroglial}. Biophysically, the increase in calcium ion $\text{Ca}^{2+}$  levels within individual glial cells can propagate to neighboring glial cells over long distances, forming $\text{Ca}^{2+}$  waves \cite{Haydon2001}. 
Current mathematical models for astrocytes are excessively complex and not easily translatable for analytical and computational purposes. Therefore, we propose a simplified glial network model to describe astrocyte dynamics. This model is constructed based on the analogy of neural networks, following the framework outlined in \cite{DePitta2020}.

Consider a group of $m$ astrocytes. Let $z_k \in \mathbb{R}$ be the state of astrocyte $k$ which denotes the activity of calcium wave. For the glial node $z_k$, we assume the output of astrocyte calcium wave is similarly defined by an activation function. To distinguish it from the neuron, we use a different function, for instance, the hyperbolic tangent function $\psi(z_k)=\tanh(z_k)$. Then, in the absence of neuro-glial interactions, the dynamics of $z_k$ is described by 
\begin{equation}\label{glialsystem}
    \tau_a\dot{z}_k=-e_kz_k+\sum_{l=1}^{m}f_{kl}\psi(z_l)+ v_k,\quad k=1,...,m, 
\end{equation}
where $\tau_a$ is a constant time parameter; $f_{kl}$ denotes the  weight of the connection from astrocyte $l$ to $k$; $v_k$ captures other external inputs. 
The usage of this phenomenological model can be justified with analogous arguments in \cite{Maurizio2022}, where a neuronal leaky integrate-and-fire model is used for astrocytes. Such a model is easy to be modified to incorporate the neuro-synapse-glial interactions and greatly facilitates the numerical and analytical investigation as shown in Sections \ref{subsec22}, \ref{subsec23}.

Stacking all the equations of neurons, synapses, and astrocytes together, we will arrive at the mathematical model for the neural-glial network as a whole. 
\begin{subequations}\label{neuroglialdynamics}
    \begin{align}
    &\tau_n\dot{x}_i=-a_ix_i+\sum_{j=1}^nw_{ij}\phi(x_j)+u_i, \quad i=1,...,n,\\
    &\tau_w\dot{w}_{ij}=-b_{ij}w_{ij}+c_{ij}\phi(x_i)\phi(x_j)+d_{ij}\psi(z_k), \quad i,j=1,...,n,\\
    &\tau_a\dot{z}_k=-e_kz_k+\sum_{l=1}^{m}f_{kl}\psi(z_l)+h_k\phi(x_i)\phi(x_j)+v_k,\quad k=1,...,m,
    \end{align}
\end{subequations}
where the additional terms $d_{ij}\psi(z_k)$ and $h_k\phi(x_i)\phi(x_j)$ with $d_{ij}, h_{k}\in \mathbb{R}$ are present to capture the high-order interaction between neurons, astrocytes and synapses according to the description in tripartite synapse structure.
In system \eqref{neuroglialdynamics}, there are $n$ and $m$ equations for $x$ and $z$ respectively. The number of synaptic connections is flexible and denoted by $o$ with $ m\leq o \leq n(n-1)$. Therefore, the dimension of system \eqref{neuroglialdynamics} is actually $(m+n+o)$. 

It is known that the activities of neurons, synapses, and astrocytes evolve on different time-scales. Neural firing occurs in milliseconds, synapse plasticity changes at a slower speed, and astrocyte processes take even longer, ranging from seconds to minutes. These varying time-scales significantly impact information processing in neural-glial interactions. To investigate the effects of these differences, we need to set the time-scale parameters, denoted as  $\tau_n$, $\tau_w$, and $\tau_a$, to different values. 
To make the speeds of the evolution of these variables distinguishable,
we have the assumption: $0<\tau_n \ll \tau_w \ll \tau_a$, with $\ll$ indicating the former entity is much smaller than the latter.  As the main goal of this work is to study neuron and astrocyte computation, we set $\tau_n = \tau_w$ for simplicity when applying the neuro-glial model to solving the tasks.

\noindent
\textbf{Dynamic context-dependent multi-armed bandit tasks}\\
  In the setting of a stochastic MAB, there is a set of actions (arms) $\mathcal{A}$ to choose from, and the bandit lasts $T$ rounds in total. In each round $t$, an agent (decision-maker)  chooses one action $a_t\in \mathcal{A}$ and obtains a reward $r_t$. The goal of the agent is to optimize the accumulated reward, i.e., $ \max_{a_t\in \mathcal{A}} \sum_{t=1}^{T}r_t$.  We consider the Bernoulli bandits which belong to stochastic MABs.
In the context of Bernoulli bandits,  the reward of each action is binary, either $1$ or $0$ depending the outcome is a success or failure.  The reward $r_i$ of the $i$-th action is drawn from a Bernoulli distribution, i.e.,
\[r_i \sim \text{Bernoulli}(\mu_i),~~i=1,...,n,\]
where $\mu_i\in [0,1]$ is a constant denoting the mean of the distribution. Different actions have different $\mu_i$ where a larger value represents a higher probability of the successful outcome and thus a higher expectation of the reward.  The reward sequence up to time $T$ is a random process
\begin{equation}
   \{ r_t \sim \{\text{Bernoulli}(\mu_i)\}_{i=1}^{n},~~ t=1,...,T.\}
\end{equation}
In the Bernoulli bandit, the goal of optimizing the accumulated reward is equivalent to minimizing the cumulative regret \eqref{regretdefinition1}. 
The standard Bernoulli bandit is stationary where all $\mu_i$ are fixed over time. 
In addition to the stationary case, we further consider non-stationary variants by making the means changeable and time-dependent. Two subcases are considered in this work:
\begin{enumerate}[1.]
\item Flip-flop switching:  the means $\mu_i$ of actions remain constant for a certain period of time, and then abruptly transit to different values $\mu_i'\in [0,1]$ at certain time instants.
\item Smooth changing: the means change according to a continuous function of time. Here, we use the periodic function 
\begin{equation}
    \mu_i(t)=\mu_i^*   S\left(Q\sin \left (\frac{2\pi t}{P} + \frac{2\pi i}{n} \right)\right), 
\end{equation}
where $\mu_i^*$ is a fixed value in $[0,1]$; $S(\cdot)$ is the sigmoid function; $P$ is used to control the period of this function and the term $\frac{2\pi i}{n}$ makes that the action with the highest expected reward can change between the available actions over time. When $Q$ is large, this type of function is dominated by an approximately constant value, such that it looks like a smooth square wave. We set $P$ and $Q$ to $10000$ and $100$ respectively.
\end{enumerate}

In dynamic bandits, when the arm means change over time and the action with the highest mean switches,  contextual information can be revealed to the agent. This contextual information represents the changes in underlying contexts. 
Therefore, the tasks we considered become context-dependent. 
We define the contextual signals as a scalar in all the simulations presented in this work. However, it is important to note that these signals can also be expanded into a multi-dimensional vector to accommodate more general settings.

\noindent
\textbf{Discrete-time neuro-glial network}\\
For simplification, we assume that the self-decay parameters are all one and the time-scales of neurons and astrocytes are the same. Then, the neuro-glial network model without inputs can be rewritten in the compact form
\begin{equation}\label{compactformneuroglia}
\begin{aligned}
 &\tau \dot{x}=-x+W\phi(x)\\
 &\tau \dot{W}=-W+C\Phi(x)+D\psi(z)\\
 &\dot{z}=-z+F\psi(z)+H\Phi(x), 
\end{aligned}
\end{equation}
where $x=[x_1,...,x_n]^\top$ and $z=[z_1,...,z_m]^\top$ are state vectors for neurons and astrocytes; $W=[w_{ij}]$ is the matrix for synapse weights and $\dot{W}$ denotes the element-wise derivative of $W$; $\phi(x) =[\phi(x_1),...,\phi(x_n)]^\top$  and $\psi(z)=[\psi(z_1),...,\psi(z_m)]^\top$ are  vectors of activation functions while $\Phi(x)$ is the flatten vector of the matrix $[\phi(x_i)\phi(x_j)]$; $C$, $D$, $F$, and $H$ are the parameter matrices with corresponding entries. 

In \eqref{compactformneuroglia}, we have set the time constant for astrocytes to the unit, while time constants for neurons and synapses are both $\tau \ll 1$. In this way, $\tau$ is dimensionless and represents the time-scale difference rate between neurons and astrocytes. 
Note that \eqref{compactformneuroglia} can be rewritten equivalently by a change of time, so that $\tau$ appears on the right hand side of $\dot{z}$.

By using the first-order Euler discretization method, we can transfer the continuous-time neuro-glial model to the discrete-time approximated version
\begin{equation}\label{discretengmodel}
\begin{aligned}
 &x_t=(1-\gamma)x_{t-1}+\gamma W_{t-1}\phi(x_{t-1})\\
 &W_t=(1-\gamma)W_{t-1}+\gamma (C\Phi(x_{t-1})+D\psi(z_{t-1}))\\
 &z_t=(1-\gamma \tau)z_{t-1}+\gamma \tau(F\psi(z_{t-1})+H\Phi(x_{t-1})),
\end{aligned}
\end{equation}
where $\gamma$ is the discretization step size. In the following simulations, $\gamma$ and $\tau$ are set to be $0.1$ and $0.01$ respectively. We  use the sigmoid function $\phi(x)=1/(1+e^{-x})$ and the hyperbolic tangent function $\psi(z)=\tanh(z)$ for neural and glial layer in the simulations.

We incorporate this discrete time neuro-glial model as the hidden layer within the entire learning network, where a pair of linear input and output layers are placed before and after the hidden layer according the convention.
The input  $I\in \mathbb{R}^{\lvert u\rvert}$ and the output $y\in \mathbb{R}^{\lvert y\rvert}$ are feed into and read from neuro-glial network after multiplied by  matrices $W_{\text{in}}^1, W_{\text{in}}^2$ and $W_{\text{out}}$.  Therefore,  the  network as a whole is represented by 

\begin{equation}\label{wholelearnnetwork}
\begin{aligned}
 &x_t=(1-\gamma)x_{t-1}+\gamma(W_{t-1}\phi(x_{t-1})+W_{\text{in}}^1 I_t)\\
 &W_t=(1-\gamma)W_{t-1}+\gamma (C\Phi(x_{t-1})+D\psi(z_{t-1}))\\
 &z_t=(1-\gamma \tau)z_{t-1}+\gamma \tau(F\psi(z_{t-1})+H\Phi(x_{t-1})+W_{\text{in}}^2 I_t)\\
 &y_t=W_{\text{out}}x_t +b_{\text{out}},
\end{aligned}
\end{equation}
where $b_{\text{out}}$ the bias vector  with the corresponding dimension.

\noindent
\textbf{Reinforcement learning procedure.}\\
We train instantiations of the discrete-time neuro-glial model to tackle the aforementioned tasks. The neuro-glial network architecture comprises $128$ neurons and $64$ astrocytes, with randomly initialized connections within each layer and interlayer hyperedges. The complete learning framework is depicted in Figure \ref{fig:main_result1}C. We first initialize the matrices $C$, $D$, $F$, $H$ in the way that the elements are drawn randomly from  normal distributions with zero mean, i.e.,
\[M_{ij}\sim \frac{1}{\sqrt{N_M}}\mathcal{N}(0,1),\]
where $N_M$ is the dimension of the focal matrix $M$.
The elements of input and output matrices $W_{\text{in}}^1$, $W_{\text{in}}^2$, $W_{\text{out}}$ and bias vector $ b_{\text{out}}$ are initialized from the uniform distribution $\mathcal{U}(-\frac{1}{\sqrt{N_M}},\frac{1}{\sqrt{N_M}})$, where $N_M$ is again the dimension.

The dimension of the output $y_t$ is the same as the number of actions in the bandits, i.e, $3$ in most simulations. After multiplied by the readout matrix and plus the bias, the output is fed to a softmax function, and it produces a probability distribution over the available actions $p_t=[p_{t}^1,p_{t}^2,p_{t}^3]$.   The probability of selecting the action $a_i \in \mathcal{A}$ is
\begin{equation}
    p_{t}^i=\frac{e^{y_{i}}}{\sum_1^3 e^{y_{j}}}, i=1,2,3.
\end{equation}
An action $a_t$ is then sampled from this probability distribution and subsequently executed by the agent. The bandit environment will provide the agent with a reward, represented as $r_{a_t}$.  
And according to \cite{rotman2023energy}, we use the loss function 
\[L=(\bar{r}_t-r_{a_t})\log p_{t}^i,\]
where $\bar{r}_t$ is the average of  rewards up to $t$ and $\log p_{t}^i$ is the logarithm of the probability.

At each trial, when the agent is presented with a new reward, the gradient of the loss function $L$ is calculated and used to update the network's parameters via the backpropagation (BP). During BP, we use the Adam method to optimize the aforementioned matrices and vectors with the default learning rate of $0.001$. 

In the case of other RNN-based methods as described below, we simply replace the neuro-glial network module with alternative network models. To ensure a fair comparison, all RNNs are constructed with $2$ stacked layers, with each layer consisting of $128$ units. The weights are initialized using the default method in PyTorch, and the training procedure remains consistent. 

The network architectures and training procedures are implemented using PyTorch in Python.

\noindent
\textbf{Learning performance comparison with different learning methods}
Numerous machine learning algorithms have been developed to tackle MABs. Among them, Upper Confidence Bound (UCB) and Thompson Sampling (TS) are widely recognized as the most prominent approaches for standard MABs.  Discounted UCB (DUCB) and switching-window UCB (SWUCB) have been devised to handle changing environments in non-stationary scenarios. In addition to these canonical bandit algorithms, some neuro-bandit algorithms that utilize feedforward or recurrent neural networks to model the agent's policy have been developed in recent years.

To perform a thorough yet not overly exhaustive assessment of learning performance, we analyze the asymptotic cumulative regret of our approach in comparison to selective algorithms across various scenarios. For stationary MABs, we evaluate our method against the UCB and TS algorithms, as well as RNN-based models including LSTM, vRNN, and GRU. In the context of non-stationary MABs, our method is compared to DUCB, SWUCB, and other RNN-based algorithms. It's worth noting that the training procedures for all RNN-based models remain consistent with the previously described methodology.

\bibliography{neuroglial.bib}% common bib file
%% if required, the content of .bbl file can be included here once bbl is generated
%%\input sn-article.bbl

%% Default %%
%%\input sn-sample-bib.tex%

\bigskip 

\bmhead{Acknowledgments}
\noindent
We acknowledge support from ARO-MURI  grant W911NF2110312 from the US Department of Defense.

\bmhead{Authors' Contributions}
\noindent
L.G. and S.C. conceived of this project. L.G. performed analyses and simulations under the supervision of S.C..  T.P. and F.P. provided input on the problem formulation and interpretation of results. L.G. and S.C. drafted the manuscript. All authors edited the final manuscript.

\bmhead{Data Availability and Code Availability}
\noindent
All code for models, model learning, and analysis will be put online before the time of publication.

\bmhead{Competing Interests}
\noindent
The authors declare no competing interests.

\noindent
Correspondence and requests for materials can be addressed to L. Gong (glulu@wustl.edu).

\newpage

% Redefine the section and subsection formats for the specific section
\titleformat{\section}
  {\normalfont\Large\bfseries}
  {SI}
  {1em}
  {}
\titleformat{\subsection}
  {\normalfont\bfseries}
  {SI.\arabic{subsection}}
  {1em}
  {}
\titleformat{\subsubsection}
  {\normalfont\bfseries}
{SI.\arabic{subsection}.\arabic{subsubsection}}
  {1em}
  {}
  
\section{Supplementary information}\label{secsi}
\subsection{Enhanced understanding of signal flow  in the tripartite synapse}\label{secsi-signalflow}
As stated in the main text, we can reveal the signal flow in the tripartite synapse more apparently via the symbolic description.
We denote the activities associated with the neurons and the astrocyte by symbols $V_N^1$, $V^{2}_N$,  and $V_A$ respectively, and the synaptic efficacy is $V_S$. Then, the interplay between astrocytes and neuronal elements can be represented by different arrows as shown in Figure \ref{fig:signal_flow}, where two instrumental feedback-loops are identified in the flow. In the first loop, the signal flows from Neuron 1 to Neuron 2 through the synapse, and Neurons 1 and 2's signals act on the synapse's efficacy (so-called Hebbian plasticity); Meanwhile,  Neurons 1 and 2's signals also affect the astrocyte's activity and in turn acts on the synaptic efficacy, and thus form the second loop based on the signal flow from Neuron 1 to Neuron 2. It is noticed that the  neurons 1 and 2's signals together act on the synapse and astrocyte. This integration of two signals is drastically different from two separated signals, and thus forms a high-order interaction.
\begin{figure}[htbp!]
    \centering
\includegraphics{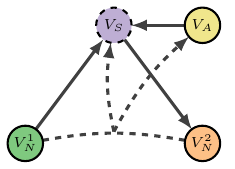}
    \caption{The signal flow in the tripartite synapse structure.}
    \label{fig:signal_flow}
\end{figure}

\subsection{Graphical description of neuro-glial population as a hypernetwork}\label{si_graphical_hypernetwork}
We have said that the neuro-glial population can be described by a two-layer hypernetwork, and now we will introduce a graphical description of this hypernetwork.

The brain is usually described by a network where  nodes represent  neurons and (directed) edges between nodes denote synaptic connections, which can be inhibitory or excitatory. In network theory, a generic (mono-layer) network can be  defined by a graph $G:=(N,E)$, where $N = \{1,...,n\} $ is the set of nodes; 
$E = \{(i,j)\lvert~ i,j\in N~ \rm{and~are~connected}\}$ is the set of edges. We consider the directional and weighted graph. That means an edge in $E$, e.g., $(i,j)$, has the direction from node $i$ to node $j$ and owns a weight $w_{ij}\in \mathbb{R}$.  In addition, a network can have multiple layers (neural and glial) that can have the same or different nodes \cite{Boccaletti2014}. 
When  considering a large number of neurons and astrocytes presenting in the brain system, we need to separate neurons and astrocytes as two groups because of their natural differences. In this regard, we prescribe that the neuro-glial populations have two different layers, with each layer representing the group of neurons and astrocytes respectively. And they are denoted by the graphs $G_n=\{N_n,E_n\}$ and $G_a=\{N_a,E_a\}$ respectively.

On the other hand, an edge only connects two nodes and thus describe the pairwise interaction. 
Because of the presence of high-order interaction within the tripartite synapse, the interconnections between the two layers cannot be represented by normal edges. We then extend it to hyperedges that can connect any number of nodes. 
A hyperedge $H_i$ is defined as a subset of $N$ satisfying $H_i\neq \emptyset$. 

Then, the whole neuro-glial network can be denoted by the hypernetwork $\mathcal{M}=\{G_n, G_a, \{H_i\}\}$.
This hypernetwork contains two layers, i.e, the neural layer and the glial layer. The neural layer consists of all the neurons and the intralayer network structure is consistent with the standard directional neural network; the glial layer includes the astrocytes and it embeds the directional network structure as well. Edges that connect a node to itself (self-loops) are excluded for both layers. If there are neurons interacting with an astrocyte, an interlayer connection takes place therein, and we use a hyperedge to represent it: each hyperedge (the triangle shape in Figure \ref{fig:tripartitesynapse}B) connects one astrocyte and two neurons. Noting that the hyperedge essentially includes the edges connecting the two neurons which represents the plastic synapses. In this sense, the defined hyperedge can well capture the high-order interactions between two neurons, the astrocyte and synapse. As a result, the neuro-glial ensemble structure is well represented by this two-layer hypernetwork.

\subsection{Well-definiteness of the neuro-glial network model}\label{secsi1}
From the mathematical point of view, it is important to check if the neuro-glia network model is posed and defined correctly.  
Our  model is given by a set of continuous-time ODEs. For such a system,  it is well-posed if the solution to an initial value problem exists and is unique. The well-posedness is guaranteed if the vector field of the model is Lipschitz continuous in the variables and continuous in time \cite{perko2013differential}. It is well-known that  common activation functions, such as the logistic sigmoid and $tanh$ are Lipshitz. In the presumption that the external inputs are continuous in time, our model is well-posed.

The model is also well-defined in the sense that the dynamics will not expand without restriction but will be confined in an appropriate subspace in the real space for any initial conditions after some certain time  as shown in \hyperref[secsi2]{SI.3}. Moreover, we can approximately estimate this region using mathematical arguments as shown in the following.

\subsection{Normal analysis of network motif dynamics}\label{secsi2}
As the neuro-glial model in this work is proposed for the first time, it is necessary to conduct a normal analysis including studying the boundedness and fixed points conditions. The network is composed of the network motifs, it is enough to consider the minimal model on the network motifs. 

\noindent
\textbf{Boundedness}
Note the network motif dynamics are given by
\begin{equation}\label{networkmotifdyanmics_si}
 \begin{aligned}
 &\tau_1 \dot{x}_1=-a_1x_1+w_2\phi(x_2)+u_1(t)\\
 &\tau_1 \dot{x}_2=-a_2x_2+w_1\phi(x_1)+u_2(t)\\
 &\tau_2 \dot{w}_{1}=-b_{1}w_{1}+c_{1}\phi(x_1)\phi(x_2)+d_{1}\psi(z)\\
 &\tau_2\dot{w}_{2}=-b_{2}w_{2}+c_{2}\phi(x_1)\phi(x_2)+d_{2}\psi(z)\\
 &\tau_3 \dot{z}=-ez+h\phi(x_1)\phi(x_2)+v(t).
 \end{aligned} 
\end{equation}
We first define the boundedness of  a general dynamical system without inputs. 
\begin{definition}
Given a dynamical system 
\[\dot{x}=f(x),\]
where $f:\mathbb{R}^n\rightarrow \mathbb{R}^n$ is continuous. Let $x(t), ~t\geq 0$ be a trajectory of the system. $x(t)$ is said to be \emph{ultimately bounded} if there exist $M>0$ and $T>0$ such that $\lVert x(t)\rVert \leq M$ for all $t\geq T$. Moreover,  the system is said to be ultimately bounded if all trajectories are ultimately bounded.
\end{definition}

For system \eqref{networkmotifdyanmics_si} without external inputs, although the vector field are defined for $\mathbb{R}^5$, we can show that the dynamics are indeed bounded after some certain time. Let $X=(x_1,x_2,x_3,w_{1},w_{2},z)^\top$. As the  activation functions are bounded, we denote the maximum values of   $\phi(\cdot)$ and $\psi(\cdot)$ by $M_1>0$ and $M_2>0$ respectively. Define the set 
\[\Omega:=\left \{X\in \mathbb{R}^5: \begin{aligned}
&\lvert x_1\lvert,\lvert x_2\lvert\leq x_{\rm{max}}\\
&\lvert w_1\lvert,\lvert w_2\lvert\leq w_{\rm{max}}\\
&\lvert z\lvert \leq z_{\rm{max}}
\end{aligned} \right\},\]
with $x_{\rm{max}}=w_{\rm{max}}M_1/\min\{a_1,a_2\}$, $w_{\rm{max}}=(\max\{\lvert c_1\lvert,\lvert c_2 \lvert\}M^2_1+\max\{\lvert d_1\lvert,\lvert d_2 \lvert\}M_2 /\min\{b_1,b_2\}$ and $z_{\rm{max}}=\lvert h\lvert M^2_1/e$.

\begin{theorem}\label{boundedsystem}
In absence of external inputs, the dynamics of \eqref{networkmotifdyanmics_si} are ultimately bounded in the set $\Omega$.
\end{theorem}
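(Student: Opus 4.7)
The plan is to exploit the near-cascade structure of (\ref{networkmotifdyanmics_si}) once one observes that the only place where a state variable enters a forcing term without first being passed through a bounded saturation is the product $w_j\phi(x_i)$ in the neuron equations: every other coupling is uniformly bounded thanks to $\lvert \phi(\cdot)\rvert \leq M_1$ and $\lvert \psi(\cdot)\rvert \leq M_2$. The workhorse at each stage is the elementary scalar comparison estimate: if $\tau \dot y = -a y + g(t)$ with $a>0$ and $\lvert g(t)\rvert \leq F$ for all $t\geq 0$, then
\[
\lvert y(t)\rvert \leq e^{-at/\tau}\lvert y(0)\rvert + \frac{F}{a}\bigl(1-e^{-at/\tau}\bigr),
\]
so $\limsup_{t\to\infty}\lvert y(t)\rvert \leq F/a$ and for any $\eta>0$ there is $T$ such that $\lvert y(t)\rvert \leq F/a + \eta$ for $t\geq T$.

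I would apply this in three stages. \textbf{Stage one} treats the astrocyte: its forcing $h\phi(x_1)\phi(x_2)$ is uniformly bounded by $\lvert h\rvert M_1^2$ regardless of the $x_i$ trajectories, yielding $\limsup_{t\to\infty}\lvert z(t)\rvert \leq \lvert h\rvert M_1^2/e = z_{\max}$. \textbf{Stage two} addresses the synaptic weights: the forcing $c_i\phi(x_1)\phi(x_2)+d_i\psi(z)$ is uniformly bounded by $\max\{\lvert c_1\rvert,\lvert c_2\rvert\}M_1^2 + \max\{\lvert d_1\rvert,\lvert d_2\rvert\}M_2$ — crucially, independent of whether $z$ has yet reached its asymptotic bound — so the comparison estimate gives $\limsup \lvert w_i(t)\rvert \leq w_{\max}$ for $i=1,2$.

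\textbf{Stage three} requires slightly more care. The forcing of each $x_i$-equation is $w_j\phi(x_j)$ with $j\neq i$, which is \emph{not} a priori uniformly bounded in time because $w_j$ is itself a trajectory variable. However, from stage two there exists $T_w$ such that $\lvert w_j(t)\rvert \leq w_{\max}+\eta$ for $t\geq T_w$, so on $[T_w,\infty)$ we have $\lvert w_j(t)\phi(x_j(t))\rvert \leq (w_{\max}+\eta)M_1$. Restarting the scalar comparison on this interval yields $\limsup_{t\to\infty}\lvert x_i(t)\rvert \leq (w_{\max}+\eta)M_1/a_i$; since $\eta$ is arbitrary and $a_i \geq \min\{a_1,a_2\}$, we conclude $\limsup\lvert x_i(t)\rvert \leq w_{\max}M_1/\min\{a_1,a_2\} = x_{\max}$. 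Taking $T$ to be the maximum of the three stage-times, the trajectory lies in (an arbitrarily small enlargement of) $\Omega$ for $t\geq T$, establishing ultimate boundedness.

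The main point to watch is the cascaded propagation of the $\eta$-slack through the three stages, and in particular ruling out what \emph{looks} like a circular dependency between the $x$- and $w$-equations. The key observation that averts this is that $x$ enters the $w$-equation only through the bounded saturation $\phi(x)$, so the $w$-bound in stage two is genuinely independent of any a priori bound on $x$ and the cascade truly runs in one direction $z \to w \to x$. A Lyapunov approach with $V = \tfrac12\sum x_i^2 + \alpha\sum w_i^2 + \beta z^2$ would also yield ultimate boundedness but would obscure the tight constants that define $\Omega$; the cascade argument delivers exactly the stated $x_{\max}, w_{\max}, z_{\max}$.
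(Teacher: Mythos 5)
Your proposal is correct and follows essentially the same route as the paper: scalar comparison (variation-of-constants) estimates applied variable by variable, bounding $z$ and $w$ first via the saturations $\lvert\phi\rvert\leq M_1$, $\lvert\psi\rvert\leq M_2$, and then the $x_i$ using the acquired bound on $w$. The paper only works the $z$-estimate explicitly and dismisses the rest with ``analogously''; your explicit handling of the $\eta$-slack and of the one-directional $z,w\to x$ cascade is a welcome filling-in of exactly those omitted steps.
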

\begin{proof}
 Let $\bar{z}(t)$ be the solution of the differential equation 
\[\dot{\bar{z}}=-e\bar{z}+\lvert h\lvert M_1^2,~~ \bar{z}(0)=z(0). \]
One can obtain that $\bar{z}(t)=(z(0)-\lvert h\lvert M_1^2/e)\exp(-et)+\lvert h\lvert M_1^2/e,$ which yeilds
\[\lvert \bar{z}(t)\lvert \leq \lvert (z(0)-\lvert h\lvert M_1^2/e)\exp(-et)\lvert + h\lvert M_1^2/e.\]
As $e>0$, the first term on the right hand side of the above inequality converges to zero exponentially. Therefore, there exist a time $T>0$ such that $\lvert \bar{z}(t)\lvert\leq h\lvert M_1^2/e$ for all $t>T$. On the other hand, from the last equation of \eqref{networkmotifdyanmics_si}, we have $\dot{z}\leq \dot{\bar{z}}$. Then, by the comparison lemma \cite[Lemma 3.4]{khalil1996nonlinear}, we get $\lvert z(t)\lvert\leq \lvert \bar{z}(t)\lvert$. It follows that there exists a time $T_1$ such that $\lvert z(t)\lvert\leq h\lvert M_1^2/e$ for all $t>T_1$. 

Analogously, we can derive the bound for the other variables as defined in $\Omega$. Finally, we prove that the dynamics of \eqref{networkmotifdyanmics_si} are ultimately bounded in the set $\Omega$.
\end{proof}

Along with the Theorem of boundedness, we have some remarks.
\begin{remark}
\begin{enumerate}[1.]
    \item Here, we focus on the autonomous system. With the proper condition that the inputs are bounded, one can  show the boundedness of the system when external inputs are presented \cite{9993009}. 
    \item The set $\Omega$ is positive invariant and attractive with respect to \eqref{networkmotifdyanmics_si}. Intuitively, by the definition of limit points \cite{wiggins1990introduction}, all the positive limit points of system \eqref{networkmotifdyanmics_si}, such as fixed points and limit cycles, must be included in the attractive set $\Omega$. This property is helpful  for obtaining the following results about fixed points.  
\end{enumerate}
\end{remark}
\noindent
\textbf{Fixed points} With the boundedness in hand,  we can study the fixed points of the system.
Continue considering the system \eqref{networkmotifdyanmics_si} without external inputs.  Letting the right hand sides of \eqref{networkmotifdyanmics_si}  be zero results in the following equations
\begin{equation}\label{nullclines}
    \begin{aligned}
&x_1=\frac{w_2\phi(x_2)}{a_1} \\ 
    &x_2=\frac{w_1\phi(x_1)}{a_2} \\
&w_1=\frac{c_1\phi(x_1)\phi(x_2)+d_1\psi(z)}{b_1}\\
&w_2=\frac{c_2\phi(x_1)\phi(x_2)+d_2\psi(z)}{b_2}\\
    &z=\frac{h\phi(x_1)\phi(x_2)}{e}.
    \end{aligned}
\end{equation}
Each of the above equations defines a nullcline in $\mathbb{R}^5$, and together their solutions (intersections of nullclines) yield the fixed points.

First, let us consider the existence of fixed points, which can be proved easily by using the Brouwer's fixed point theorem \cite{karamardian2014fixed}. Define the mapping 
\[F:=\left( \begin{aligned}
    &\frac{w_2\phi(x_2)}{a_1}\\
    &\frac{w_1\phi(x_1)}{a_2}\\
    &\frac{c_1\phi(x_1)\phi(x_2)+d_1\psi(z)}{b_1}\\
    &\frac{c_2\phi(x_1)\phi(x_2)+d_2\psi(z)}{b_2}\\
    &\frac{h\phi(x_1)\phi(x_2)}{e}
\end{aligned}\right).\]

\begin{theorem}
System \eqref{networkmotifdyanmics_si}  has (at least) a fixed point in $\Omega$.
\end{theorem}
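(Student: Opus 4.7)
The plan is to apply Brouwer's fixed point theorem to the map $F$ already defined above, and then note that a fixed point of $F$ is, by construction, a simultaneous solution of the nullcline equations (\ref{nullclines}) and hence an equilibrium of (\ref{networkmotifdyanmics_si}). So the proof reduces to verifying the three hypotheses of Brouwer on the candidate domain $\Omega$: it must be a nonempty, convex, compact subset of $\mathbb{R}^5$; $F$ must be continuous on it; and $F$ must map $\Omega$ into itself.

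First I would observe that $\Omega$ is a Cartesian product of closed intervals in $\mathbb{R}^5$, hence is closed, bounded (in particular compact), convex, and nonempty. Continuity of $F$ follows immediately from the continuity of $\phi$ and $\psi$ (sigmoid and hyperbolic tangent are smooth), together with the fact that the coefficients $a_i, b_i, c_i, d_i, e, h$ are constants with $a_i, b_i, e > 0$, so no denominators vanish.

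The main (and only non-routine) step is to check $F(\Omega) \subseteq \Omega$, which is essentially a component-by-component bound echoing the construction of $\Omega$ in Theorem~\ref{boundedsystem}. For $X \in \Omega$ and using $|\phi| \leq M_1$, $|\psi| \leq M_2$:
\begin{align*}
\left|\tfrac{w_2 \phi(x_2)}{a_1}\right| &\leq \tfrac{w_{\max} M_1}{\min\{a_1,a_2\}} = x_{\max}, \\
\left|\tfrac{c_1 \phi(x_1)\phi(x_2)+d_1\psi(z)}{b_1}\right| &\leq \tfrac{\max\{|c_1|,|c_2|\}M_1^2 + \max\{|d_1|,|d_2|\}M_2}{\min\{b_1,b_2\}} = w_{\max}, \\
\left|\tfrac{h\phi(x_1)\phi(x_2)}{e}\right| &\leq \tfrac{|h| M_1^2}{e} = z_{\max},
\end{align*}
and the remaining two components (for $x_2$ and $w_2$) are bounded by the symmetric versions of the first two inequalities. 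Hence $F(X) \in \Omega$ for every $X \in \Omega$.

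Having verified the three hypotheses, Brouwer's fixed point theorem yields an $X^* \in \Omega$ with $F(X^*) = X^*$. Comparing componentwise with the equations (\ref{nullclines}) obtained by setting the right-hand sides of (\ref{networkmotifdyanmics_si}) to zero, $X^*$ is an equilibrium of the system, completing the proof. The principal obstacle is purely bookkeeping, namely selecting the worst-case coefficients $\max\{|c_1|,|c_2|\}$ etc.\ so that the image bounds collapse precisely onto the radii defining $\Omega$; this is why the constants $x_{\max}, w_{\max}, z_{\max}$ in Theorem~\ref{boundedsystem} are stated with those particular maxima and minima rather than with the individual parameters.
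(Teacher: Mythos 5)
Your proposal is correct and follows essentially the same route as the paper: apply Brouwer's fixed point theorem to the map $F$ on the compact, convex set $\Omega$ and identify fixed points of $F$ with solutions of the nullcline equations \eqref{nullclines}. In fact you go slightly further than the paper's own proof by explicitly verifying the self-mapping hypothesis $F(\Omega)\subseteq\Omega$ componentwise (the paper only invokes continuity, compactness, and convexity, leaving the invariance of $\Omega$ implicit in its construction), and your bounds for that step are correct.
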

\begin{proof}
It is easy to check that the defined mapping $F$ is continuous. To prove the existence of fixed points,
according to Brouwer's Fixed-point Theorem, we only need to show the set $\Omega$ is compact and convex. Since $\Omega$ is bounded and closed, the compactness follows. In addition, the set $\Omega$ actually defines a hyper rectangle in $\mathbb{R}^5$, which is convex. Therefore, we can conclude that there exists (at least) one fixed point of system \eqref{networkmotifdyanmics_si} in $\Omega$.
\end{proof}

Next,  we examine the uniqueness of the fixed point in \eqref{networkmotifdyanmics_si}. The Jacobian of $F$ is given by 
\begin{equation}
\begin{aligned}
    \jac{F}&= \begin{bmatrix}
    0&\frac{w_2\phi'(x_2)}{a_1}&0&\frac{\phi(x_2)}{a_1}&0\\
    \frac{w_1\phi'(x_1)}{a_2}&0&\frac{\phi(x_1)}{a_2}&0&0\\
    \frac{c_1\phi'(x_1)\phi(x_2)}{b_1}&\frac{c_1\phi(x_1)\phi'(x_2)}{b_1}&0&0&\frac{d_1\psi'(z)}{b_1}\\
    \frac{c_2\phi'(x_1)\phi(x_2)}{b_2}&\frac{c_2\phi(x_1)\phi'(x_2)}{b_2}&0&0&\frac{d_2\psi'(z)}{b_2}\\
    \frac{h\phi'(x_1)\phi(x_2)}{e}&\frac{h\phi(x_1)\phi'(x_2)}{e}&0&0&0
    \end{bmatrix}
    \end{aligned}
\end{equation}
To ensure that Eqs. \eqref{nullclines} have a unique solution in the previously obtained set $\Omega$, one sufficient condition is that the 
inequality
\begin{equation}\label{1normJacobian}
    \lVert \jac{F}\rVert_{\Omega}=\sup_{X\in \Omega} \lVert \jac{F}(X)\rVert <1
\end{equation}
holds, where $\lVert \cdot \rVert$ denotes the matrix norm.
We take the $1$-norm of $\jac{F}$, i.e., the maximum of the  absolute values sum of the rows
\begin{equation}\label{1normJacobian1}
\begin{aligned}
    &\lVert \jac{F}(X)\rVert=\\
    &\max \left\{ 
     \left\lvert\frac{w_2\phi'(x_2)}{a_1} \right \rvert+ \left\lvert\frac{c_1\phi(x_1)\phi'(x_2)}{b_1} \right \rvert+ \left\lvert\frac{c_2\phi(x_1)\phi'(x_2)}{b_2} \right \rvert +\left\lvert\frac{h\phi(x_1)\phi'(x_2)}{e}\right\rvert, \right.\\ 
    &\quad\quad\quad\quad\left. \left\lvert\frac{w_1\phi'(x_1)}{a_2} \right \rvert+ \left\lvert\frac{c_1\phi'(x_1)\phi(x_2)}{b_1} \right \rvert+ \left\lvert\frac{c_2\phi'(x_1)\phi(x_2)}{b_2} \right \rvert+\left\lvert\frac{h\phi'(x_1)\phi(x_2)}{e}\right\rvert,\right.\\ &\quad\quad\quad\quad\left. \left\lvert\frac{\phi(x_1)}{a_2} \right \rvert, \left\lvert\frac{\phi(x_2)}{a_1} \right \rvert, \left\lvert\frac{d_1\psi'(z)}{b_1}\right\rvert + \left\lvert\frac{d_2\psi'(z)}{b_2}\right\rvert\right\}.
    \end{aligned}
\end{equation}
Note that all the variables and the derivatives of activation functions are bounded. It is always possible to find such conditions that \eqref{1normJacobian1} is less than $1$ for all points in $\Omega$. To showcase, let us consider the case where $\phi(\cdot)=1/(1+e^{-x})$, $\phi'(\cdot)=e^{-x}/(1+e^{-x})^2$ and $\psi(z)=(e^{z}+e^{-z})/(e^{z}+e^{-z})$, $\psi'(z)=1-\psi^2(z)$. Then we have 
\[\sup \left\lvert\frac{\phi(x_1)}{a_2} \right \rvert=\frac{1}{a_2},\]
\[ \sup \left\lvert\frac{\phi(x_2)}{a_1} \right \rvert=\frac{1}{a_1},\]
\[\sup \left(\left\lvert\frac{d_1\psi'(z)}{b_1}\right\rvert + \left\lvert\frac{d_2\psi'(z)}{b_2}\right\rvert\right)=\left\lvert\frac{d_1}{b_1}\right\rvert+\left\lvert\frac{d_2}{b_2}\right\rvert,\]
\begin{equation*}
    \begin{aligned}
    & \sup \left(\left\lvert\frac{w_1\phi'(x_1)}{a_2} \right \rvert+ \left\lvert\frac{c_1\phi'(x_1)\phi(x_2)}{b_1} \right \rvert+ \left\lvert\frac{c_2\phi'(x_1)\phi(x_2)}{b_2} \right \rvert +\left\lvert\frac{h\phi'(x_1)\phi(x_2)}{e}\right\rvert\right)\\
    &=\frac{\lvert w_1\rvert_{\max} }{4a_2}+\left\lvert\frac{c_1}{4b_1}\right \rvert+\left\lvert\frac{c_2}{4b_2}\right \rvert+\left\lvert\frac{h}{4e}\right \rvert,\\
    &\sup \left(\left\lvert\frac{w_2\phi'(x_2)}{a_1} \right \rvert+ \left\lvert\frac{c_1\phi(x_1)\phi'(x_2)}{b_1} \right \rvert+ \left\lvert\frac{c_2\phi(x_1)\phi'(x_2)}{b_2} \right \rvert +\left\lvert\frac{h\phi(x_1)\phi'(x_2)}{e}\right\rvert\right)\\
    &=\frac{\lvert w_2\rvert_{\max} }{4a_1}+\left\lvert\frac{c_1}{4b_1}\right \rvert+\left\lvert\frac{c_2}{4b_2}\right \rvert+\left\lvert\frac{h}{4e}\right \rvert,
    \end{aligned}
\end{equation*}
where $\lvert w_i\rvert_{\max}, i=1,2$ are the maximum values of $\lvert w_i \rvert$ taking in $\Omega$.
Therefore, one has $\lVert \jac{F}\rVert<1$ for all $X\in \Omega$ if the following conditions are satisfied
\begin{equation}\label{conditionofunquefixed point}
  a_1>1,~a_2>1,~\left\lvert\frac{d_1}{b_1}\right\rvert+\left\lvert\frac{d_2}{b_2}\right\rvert<1,~\max \left\{\frac{\lvert w_2\rvert_{\max} }{a_1},\frac{\lvert w_1\rvert_{\max} }{a_2} \right\}+\left\lvert\frac{c_1}{b_1}\right \rvert+\left\lvert\frac{c_2}{b_2}  \right \rvert +\left\lvert\frac{h}{e}  \right \rvert <4.
\end{equation}
Then, it follows that the mapping $F$ is a contraction mapping in $\Omega$ \cite{chicone2008ordinary}. A contraction mapping has the property for admitting a unique fixed point as stated in Banach's fixed point theorem \cite{chicone2008ordinary}. 
According to this theorem,  $F$ has a unique fixed point  in $\Omega$ , which implies the system \eqref{networkmotifdyanmics_si} has a unique fixed point as stated in the following theorem.
\begin{theorem}\label{uniquefixed point}
When \eqref{conditionofunquefixed point} holds, system \eqref{networkmotifdyanmics_si} has a unique fixed point in the defined domain, and this fixed point is located in the set $\Omega$.
\end{theorem}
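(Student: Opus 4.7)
\smallskip
\noindent
\textbf{Proof proposal.} The plan is to recast fixed points of system \eqref{networkmotifdyanmics_si} as fixed points of the continuous self-map $F$ already introduced and then apply the Banach contraction principle on the set $\Omega$. Existence and containment in $\Omega$ are essentially settled: the compactness/convexity of the hyper-rectangle $\Omega$, together with the boundedness Theorem \ref{boundedsystem}, were already used to invoke Brouwer's theorem, and the same argument shows $F(\Omega)\subseteq \Omega$. So the task reduces to upgrading this existence statement to uniqueness by showing that $F$ is a strict contraction on $\Omega$ under hypothesis \eqref{conditionofunquefixed point}.

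The tool I would use is the mean-value inequality on the convex set $\Omega$: for any $X,Y\in\Omega$, $\lVert F(X)-F(Y)\rVert_1 \le \lVert \jac{F}\rVert_{\Omega}\,\lVert X-Y\rVert_1$, where the operator norm is taken over the segment joining $X$ and $Y$, which lies in $\Omega$ by convexity. Thus it suffices to show that $\lVert \jac{F}\rVert_{\Omega}<1$ in the induced $1$-norm. I would take the row-wise $1$-norm expression \eqref{1normJacobian1} already assembled in the excerpt and insert the sharp suprema of the activation-function terms over $\Omega$, namely $\sup\phi=1$, $\sup|\phi'|=1/4$, and $\sup|\psi'|=1$, together with the a priori bounds $|w_i|\le |w_i|_{\max}$ derived inside $\Omega$. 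Routine bookkeeping shows the row maxima become precisely the three quantities appearing on the left-hand sides of the inequalities in \eqref{conditionofunquefixed point}, so each row sum is strictly below $1$, yielding $\lVert \jac{F}\rVert_\Omega<1$.

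With $F:\Omega\to\Omega$ continuous and strictly contractive on the complete metric space $\Omega$, Banach's fixed point theorem delivers a unique fixed point $X^\star\in\Omega$. Since the fixed points of $F$ are exactly the zeros of the right-hand side of \eqref{networkmotifdyanmics_si}, this $X^\star$ is the unique equilibrium of the system in $\Omega$. Finally, because $\Omega$ is attractive (Theorem \ref{boundedsystem}) every equilibrium of \eqref{networkmotifdyanmics_si} in $\mathbb{R}^5$ must lie in $\Omega$, so uniqueness in $\Omega$ upgrades to global uniqueness in the state space.

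The main obstacle I anticipate is not the contraction bound itself, which is a mechanical verification, but rather the step that guarantees $F$ maps $\Omega$ into $\Omega$ with the specific constants $x_{\max}, w_{\max}, z_{\max}$ used to define $\Omega$. The entries of $F$ are bounded by exactly the same quantities that were used to build $\Omega$ in Theorem \ref{boundedsystem}, so this invariance is essentially built-in, but it needs to be stated carefully so that the mean-value inequality can be applied on straight line segments inside $\Omega$. A secondary subtlety is that the bound $|w_i|\le|w_i|_{\max}$ used in the row sums must be the same $w_{\max}$ from $\Omega$; otherwise the contraction condition \eqref{conditionofunquefixed point} would need to be re-derived with matching constants. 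Once this consistency is checked, the Banach argument closes the proof.
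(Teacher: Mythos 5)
Your proposal is correct and follows essentially the same route as the paper: bound the row sums of $\jac{F}$ over $\Omega$ using $\sup\phi=1$, $\sup|\phi'|=1/4$, $\sup|\psi'|=1$ so that \eqref{conditionofunquefixed point} gives $\lVert \jac{F}\rVert_{\Omega}<1$, conclude $F$ is a contraction on the compact convex set $\Omega$, and invoke Banach's fixed point theorem. The only difference is that you spell out the mean-value inequality on $\Omega$, the self-mapping property $F(\Omega)\subseteq\Omega$, and the attractivity argument upgrading uniqueness in $\Omega$ to the whole state space, details the paper leaves implicit.
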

\begin{remark}
In the above process, we used the $1$-norm of $\jac{F}$ and arrive at the sufficient conditions \eqref{conditionofunquefixed point}. Of course, one can use other norms and thus obtain different sufficient conditions for the uniqueness of the fixed point.
\end{remark}

Next, to go beyond the single fixed point, we investigate the conditions for the existence of multiple fixed points. 
As the involvement of so many parameters and uncertain activation functions, it is difficult to fully and analytically characterize the parameter conditions for the existence of multiple fixed points. For simplicity, we restrict to the case of sigmoid and hyperbolic tangent activation functions, i.e., $\phi(x)=1/(1+e^{-x})$ and $\psi(z)=(e^{z}+e^{-z})/(e^{z}+e^{-z})$.

In \eqref{nullclines}, we substitute the third and fourth equations to the first two, and arrive at the following set of equations with the reduced dimension
\begin{equation}\label{reducednullclines}
    \begin{aligned}
    &x_1=\frac{(c_2\phi(x_1)\phi(x_2)+d_2\psi(z))\phi(x_2)}{a_1b_2} \\ 
    &x_2=\frac{(c_1\phi(x_1)\phi(x_2)+d_1\psi(z))\phi(x_1)}{a_2b_1} \\
    &z=\frac{h\phi(x_1)\phi(x_2)}{e}.
    \end{aligned}
\end{equation}
Now we presume that the values of variables $x_1$, $x_2$ and $z$ are large  so that $\phi(x_1)=1/(1+e^{-x_1})= 1-\sigma_1$, $\phi(x_2)=1/(1+e^{-x_2})= 1-\sigma_2$, and $\psi(z)=(e^{z}+e^{-z})/(e^{z}+e^{-z})= 1-\sigma_3$ where $0<\sigma_1,\sigma_2,\sigma_3<1$ are   small. In doing so, \eqref{reducednullclines} yields
\begin{equation}\label{reducednullcline}
    \begin{aligned}
    &x_1^*=\frac{c_2(1-\sigma_1)(1-\sigma_2)^2+d_2(1-\sigma_2)(1-\sigma_3)}{a_1b_2} \\ 
    &x_2^*=\frac{c_1(1-\sigma_1)^2(1-\sigma_2)+d_1(1-\sigma_1)(1-\sigma_3)}{a_2b_1}\\
    &z^*=\frac{h(1-\sigma_1)(1-\sigma_2)}{e}.
    \end{aligned}
\end{equation}
To ensure that the solution given by  \eqref{reducednullcline} is one fixed point of system \eqref{networkmotifdyanmics_si}, the following equalities should also be satisfied
\begin{equation}\label{reducednullclines1}
    \begin{aligned}
    &1/(1+e^{-x_1^*})= 1-\sigma_1 \\ 
    &1/(1+e^{-x_2^*})= 1-\sigma_2\\
    &(e^{z^*}-e^{-z^*})/(e^{z^*}+e^{-z^*})= 1-\sigma_3.
    \end{aligned}
\end{equation}
Now, the question turns to be finding a collection of parameters $a_1,...,h$ such that \eqref{reducednullclines1} holds with $\sigma_1,\sigma_2,\sigma_3>0$ being very small.  We further simplify this problem by assuming that $a_1b_2=a_2b_1,c_1=c_2,d_1=d_2$ and $\sigma_1=\sigma_2=\sigma_3$. Then we have  
\begin{equation}\label{reducednullclines2}
1+e^{-x_1^*}=1+e^{-x_2^*}=(e^{z^*}+e^{-z^*})/(e^{z^*}-e^{-z^*})= 1/(1-\sigma).
\end{equation}
Substituting \eqref{reducednullcline}  results in
\begin{equation*}
    \begin{aligned}
    &\frac{c_2(1-\sigma)^3+d_2(1-\sigma)^2}{a_1b_2}=\ln{\frac{\sigma}{1-\sigma}} \\ 
    &\frac{2h(1-\sigma)^2}{e}=\ln{\frac{2-\sigma}{\sigma}}.
    \end{aligned}
\end{equation*}
It then yields
\begin{equation}\label{reducednullclines3}
  \frac{ec_2(1-\sigma)^3+(ed_2+2ha_1b_2)(1-\sigma)^2}{a_1b_2e} =  \ln{\frac{2-\sigma}{1-\sigma}}
\end{equation}
 The right hand side of \eqref{reducednullclines3} is in the range $\mathbb{R}_+$ and is monotonically increasing  for $\sigma\in (0,1)$. Note that $a_1$, $b_2$, $c_2$, $d_2$, $e$, $h$ are free parameters. When they are all positive, the left hand side of \eqref{reducednullclines3}
 is in the range $(0, \frac{ec_2+ed_2+2ha_1b_2}{a_1b_2e})$ and is monotonically decreasing for $\sigma\in (0,1)$. That means the two sides must have one intersection for $\sigma\in (0,1)$. In this case, there will be a unique fixed point. 
 On the other hand, such many free parameters can give rise to other possibilities. We can fix some parameters to be constant values, e.g., $a_1b_2=0.1$, $d_2=e=1$, $h=1$.   It can be calculated that when $-1.5595<c_2<-1.1307$,  the two sides  of \eqref{reducednullclines3} will always have intersections as shown in Figure \ref{fig:two_intersections}B, which results in two fixed points for the system. 
 It can be expected that when we release these restrictions on the parameters, it is easier for the system to have more fixed points.  
 \begin{figure}
     \centering
\includegraphics[width=10cm]{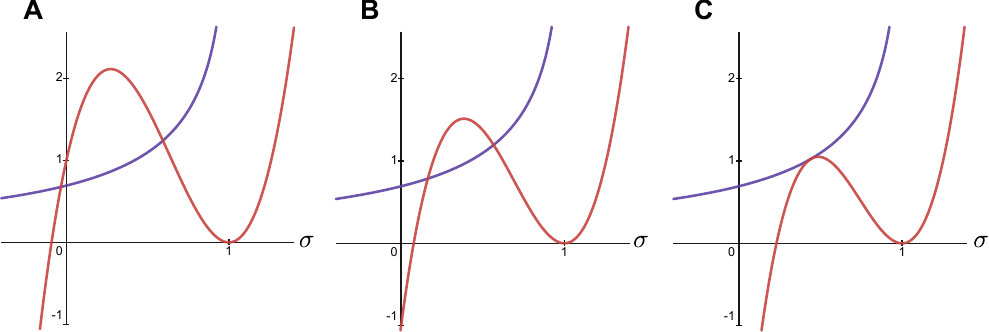}
          \caption{Intersections of two curves where the blue line is the right hand side and brown line is the left hand side. A.  one intersection when $c_2=-1.1$; B. two intersections when $c=-1.3$; C one intersection when $c=-1.5595$.}
     \label{fig:two_intersections}
 \end{figure}

\begin{remark}
We have shown in the above how the system can admit two fixed points analytically and numerically. The case of multiple fixed points is not rare because of the many parameters, and an example of $3$ fixed points can be obtained under some conditions as in Figure \ref{fig:networkmotif}. To show the case of more fixed points is tedious and marginal, an thus it is out of scope of this work. 
\end{remark}

\begin{figure}[t!]
    \centering
\includegraphics[width=11cm]{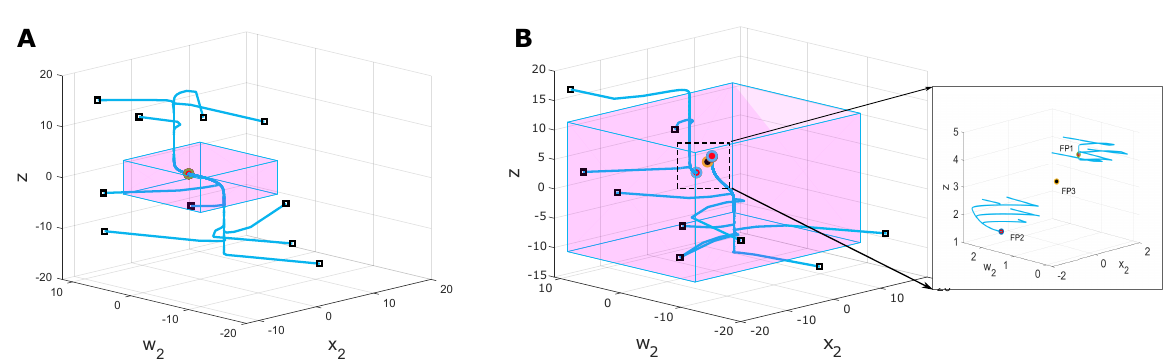}
    \caption{The cases of a single and multiple fixed points: pink cubes are the sets $\Omega$; red dots are stable fixed points while the black dot is the unstable one. The time-scale parameters are $\tau_1=\tau_2=0.01$, $\tau_3=1$, and in A, the other parameters are $a_1=2$,
$a_2=1$,
$b_1=1.2$,
$b_2=1.7$,
$c_1=2$,
$c_2=-3$,
$d_1=-4$,
$d_2=5$,
$e=2$, $h=6.6$, while in B $a_1=0.7$,
$a_2=0.6$,
$b_1=1.6$,
$b_2=1.7$,
$c_1=12$,
$c_2=-10$,
$d_1=-4$,
$d_2=5$,
$e=0.6$, $h=6$.}
\label{fig:single_multiple_fixed_points}
\end{figure}

%%%%%%%%%%%%%%%%%%%%%%%%%%%%%%%%%%%%%%%%%%%%%%%%%%%%%%%%%%%%%%%%%%%%%%%%%%%%% 
\subsection{Singular perturbation analysis}\label{secsi3}
By a change of time  $t'=\epsilon t$, we can rewrite the network motif dynamics without external inputs into  
\begin{equation}\label{networkmotifdynamics_si1}
 \begin{aligned}
 &x'_1=-a_1x_1+w_2\phi(x_2)\\
 &x'_2=-a_2x_2+w_1\phi(x_1)\\
 &w'_{1}=-b_{1}w_{1}+c_{1}\phi(x_1)\phi(x_2)+d_{1}\psi(z)\\
 &w'_{2}=-b_{2}w_{2}+c_{2}\phi(x_1)\phi(x_2)+d_{2}\psi(z)\\
 &z'=\epsilon(-ez+h\phi(x_1)\phi(x_2)),
 \end{aligned} 
\end{equation}
where $'=d/d t'$ denotes the differentiation with respect to $t'$. Because of the nature of small value of $\epsilon$, \eqref{networkmotifdyanmics_si} (or \eqref{networkmotifdynamics_si1}) indeed defines a perturbation problem. The \emph{singular perturbation theory} \cite{kuehn2015multiple} has been developed to solve such problems in the past few decades. In the following, we turn to the analysis of system \eqref{networkmotifdyanmics_si} from the singular perturbation perspective. 

By setting $\epsilon=0$ in \eqref{networkmotifdynamics_si1}, we obtain the singular limit of the system, i.e.,
\begin{equation}\label{networkmotifdynamics2}
 \begin{aligned}
 &x'_1=-a_1x_1+w_2\phi(x_2)\\
 &x'_2=-a_2x_2+w_1\phi(x_1)\\
 &w'_{1}=-b_{1}w_{1}+c_{1}\phi(x_1)\phi(x_2)+d_{1}\psi(z)\\
 &w'_{2}=-b_{2}w_{2}+c_{2}\phi(x_1)\phi(x_2)+d_{2}\psi(z)\\
 &z'=0.
 \end{aligned} 
\end{equation}

In the above system, the derivative of  $z$ is  zero. Intuitively, one can consider that the $z$-variable is fixed as in initial conditions, i.e., $z(t)=z_0 \in \mathbb{R}$. It results in the flowing truncated system
\begin{equation}\label{fastsubsystemnetworkmotifdynamics}
 \begin{aligned}
 &x'_1=-a_1x_1+w_{2}\phi(x_2)\\
 &x'_2=-a_2x_2+w_{1}\phi(x_1)\\
 &w'_{1}=-b_{1}w_{1}+c_{1}\phi(x_1)\phi(x_2)+d_{1}\psi(z_0)\\
 &w'_{2}=-b_{2}w_{2}+c_{2}\phi(x_1)\phi(x_2)+d_{2}\psi(z_0)\\
 \end{aligned} 
\end{equation}
The above system captures the dynamics on the neuronal layer, i.e., coupled rate-based RNN and Hebbian learning of synapses. Since $z_0$ is a constant in system \eqref{fastsubsystemnetworkmotifdynamics}, we take it as a non-dynamic parameter that can take different values. 

As  shown  in Theorem \ref{boundedsystem}, system \eqref{networkmotifdyanmics_si} is bounded . Let $z_{\min}$ and $z_{\min}$ represent the minimum and maximum values that variable $z$ can take in $\Omega$. Under the assumption that $z(t)$ can span the whole admitted space in $\Omega$, we have that $z_0\in [z_{\min}, z_{\max}]$. 
As a consequence, if the dynamics of the neuronal layer exhibit critical changes, such as changes of the number and/or stability of the fixed points,  we can say there exist bifurcations in \eqref{fastsubsystemnetworkmotifdynamics} with respect to $z_0$, and these bifurcations are indeed induced by the  self-slowly-varying astrocytic process.

In the following part, we will analyze the dynamics of the subsystems \eqref{fastsubsystemnetworkmotifdynamics} in the spirit of the above idea.

\noindent
\textbf{Astrocytes regulate  neural dynamics}
We visualize the change process of the fixed point set of system \eqref{fastsubsystemnetworkmotifdynamics} as a consequence of the perturbation inducing from the constant glial signal. Since there are $4$ variables in Eqs. \eqref{nullclines}, the first step will be reducing the dimension, otherwise it is difficult to visualize these nullclines in 3D coordinates. By eliminating the variable $w_2$, we have
\begin{subequations}\label{nullclines2}
    \begin{align}
    \label{nullclines2a}
    &x_2=\frac{w_1\phi(x_1)}{a_2} \\
    \label{nullclines2b}&w_1=\frac{c_1\phi(x_1)\phi(x_2)+d_1\psi(z_0)}{b_1}\\
    \label{nullclines2c}&\frac{a_1x_1}{\phi(x_2)}=\frac{c_2\phi(x_1)\phi(x_2)+d_2\psi(z_0)}{b_2}.
    \end{align}
\end{subequations}
Note that the activation function  $\psi_{\min}\leq \psi(z_0)\leq \psi_{\max}$. To examining the change of fixed points is equivalent to studying the change of the intersection of \eqref{nullclines2a}-\eqref{nullclines2c} as $\psi(z_0)$ varies in $[\psi_{\min},\psi_{\max}]$. Recall that each equation of \eqref{nullclines2} defines a manifold in $(x_1,x_2,w_1)\in \tilde{\Omega}_{\rm motif} \subset \mathbb{R}^3$. We can show these manifolds geometrically for given parameters, where Figure \ref{fig:intersectionchanging} displays the situations for $\psi(z_0)=\psi_{\min}$, $\psi(z_0)=0$, and $\psi(z_0)=\psi_{\max}$ respectively under the parameter condition
$a_1=0.3,a_2=0.4,b_1=1,b_2=0.5,c_1=6,c_2=-5,d_1=-2,d_2=3$.

\begin{figure}[htbp!]
    \centering
    	\includegraphics[width=10cm]{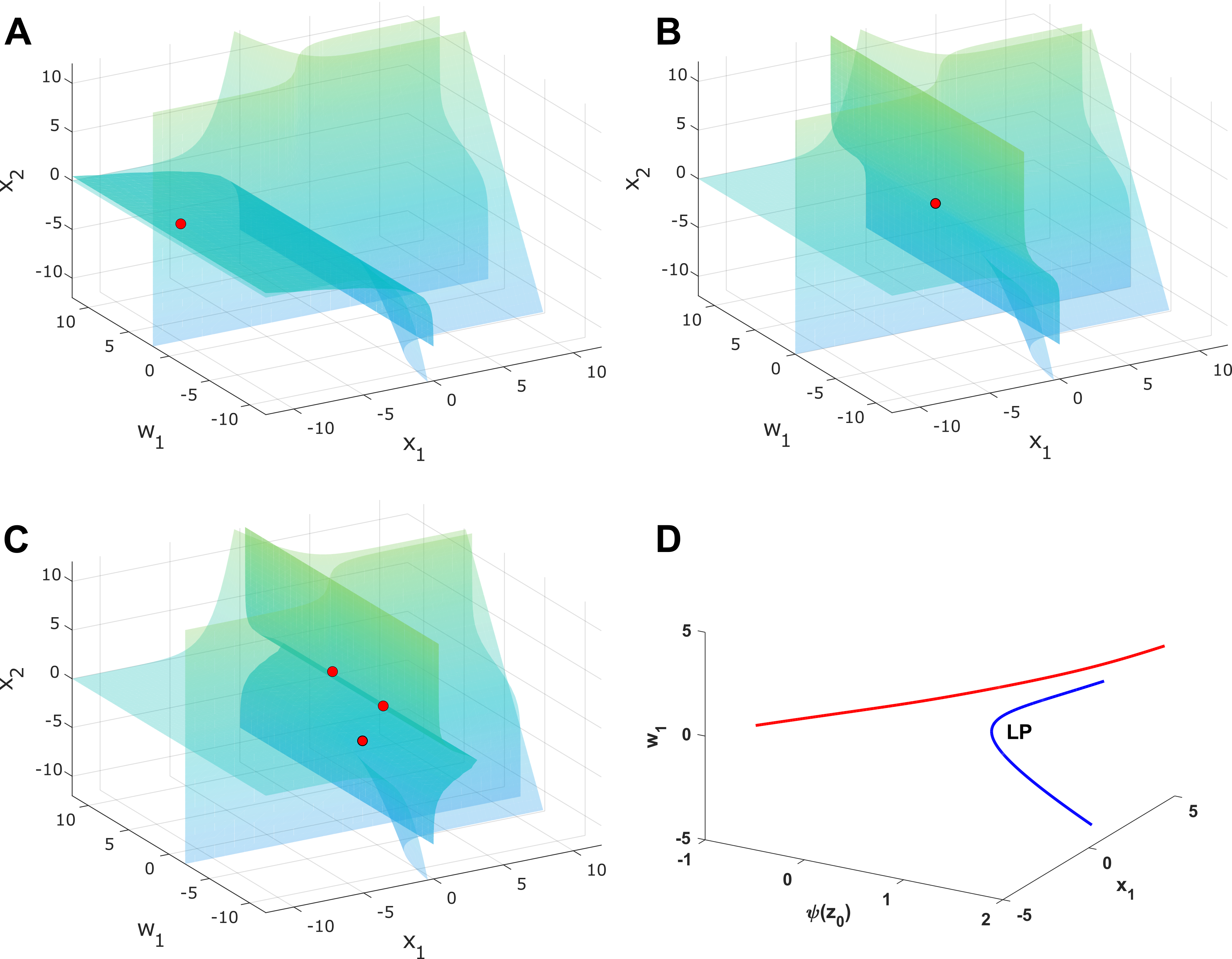}
\caption{The fixed points (red dots) of the neural subsystem for different values of $\psi(z_0)$: A. $\psi(z_0)\approx -1$, B. $\psi(z_0)=0$, C. $\psi(z_0)\approx 1$   In D, LP denotes the bifurcation point.} 
\label{fig:intersectionchanging}
\end{figure}

In Figure \ref{fig:intersectionchanging}, when $\psi(z_0)\approx -1$ there is one fixed point (red dot); as $\psi(z_0)$ increases, the position of this fixed point changes accordingly. When $\psi(z_0)\approx1$, another $2$ fixed points exist. This means there is an increase of fixed point points  at a certain value of $\psi(z_0)$, and this is confirmed by the bifurcation diagram obtained with Matcont ( see Fig \ref{fig:intersectionchanging}D). It indicates that a branch of fixed points (red line) always exists. In contrast,  the other branch of fixed points (blue line) exists when $\psi(z_0)\geq 0.7818$, but a saddle-node bifurcation occurs at $\psi(z_0)= 0.7818$ such that these two fixed points collide and annihilate each other. We call this process a \emph{pseudo-bifurcation} resulted from the change of the glial activity. And the neural dynamic behaviors are regulated by the glial process in this top-down manner.

\subsection{Extended simulation results}\label{secsi4}
In this section, we provide extra simulations  that are complimentary to the results in the main text.
\subsubsection{Regrets and converging time}\label{si_regrets_time}
Figure \ref{fig:extra_result_regrets} shows the details of each method in the learning procedure. As shown in the plots, the regrets of every method are dense at the beginning because the agent needs to explore the environment. After enough time, the agent can make the optimal actions such that there are no more regrets. It is observed that neuro-glial method does not generate regrets after about $2000$ trials  while other methods still give rise to regrets in the remaining trials. Therefore, the neuro-glial method takes the shortest time to converge. In the right plot, we can see that the neuro-glial method accounts for a medium amount of asymptotic cumulative regret among all the methods, while TS method has the lowest and UCB method has the highest asymptotic cumulative regret.
\begin{figure}[htbp!]
    \centering
\includegraphics[width=11cm]{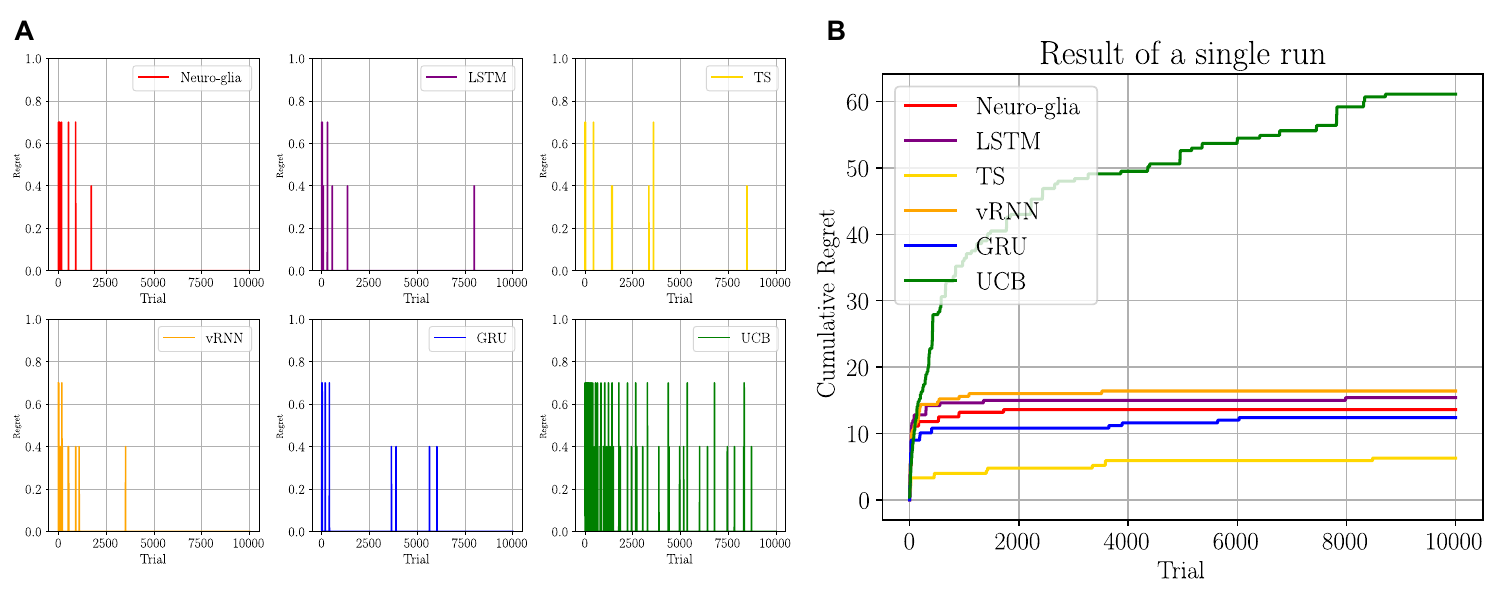}
    \caption{Regrets per trial and cumulative regrets of each method. A shows the regrets per trial for different methods, while B shows the cumulative regrets over trials.}
    \label{fig:extra_result_regrets}
\end{figure}

\subsubsection{Robustness analysis in diverse stationary bandits}\label{secsirobust} 
We conduct a robustness analysis of all methods under various conditions of arm probabilities, specifically $(\mu_1, \mu_2, \mu_3) = (0.6-\lambda, 0.6, 0.6+\lambda)$ with $0 < \lambda < 0.4$. The UCB method is less competitive and excluded from this comparison due to its significantly larger regrets. As $\lambda$ decreases, the bandit becomes more challenging due to the arms' probabilities converging. To vary the difficulty of the bandit tasks, we change $\lambda$ from $0.38$ to $0.02$ 
(see Figure \ref{fig:result_robust}). It is recognized that neuro-glial method performs similarly to other RNN-based methods under larger $\lambda$ values. However, our method tends to exhibit better and more robust performance in more challenging bandits in contrast to others, which experience a decline in performance. 

\begin{figure}[htbp!]
    \centering
\includegraphics[width=11cm]{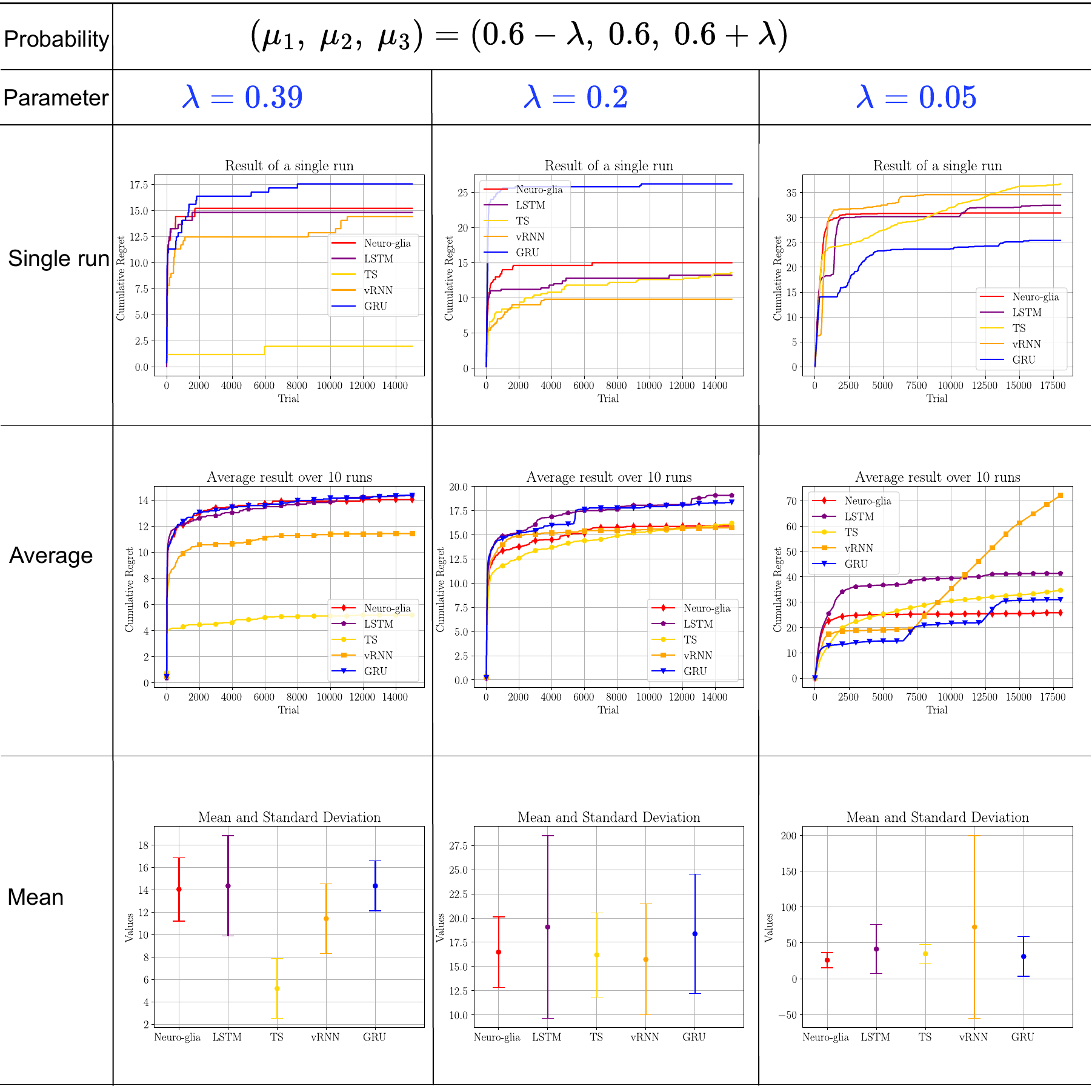}
    \caption{The learning performance robustness of each method  is examined under different conditions of arm probabilities.}
    \label{fig:result_robust}
\end{figure}

\subsubsection{Detailed performance comparison in non-stationary bandits}\label{si-detail_non-stationary}
Figure \ref{fig:extra_result_nonstationary_full} gives a comprehensive comparison of different methods in the non-stationary bandits. The neuro-glial method consistently attains the lowest and maintains near-stationary asymptotic cumulative regret, both in single and multiple runs. In contrast, other methods struggle to adapt to evolving environments, resulting in steadily increasing regrets. This result is corroborated in experiments of both the flip-flop and smooth-changing non-stationary bandits.

\begin{figure}[htbp!]
    \centering
\includegraphics[width=10cm]{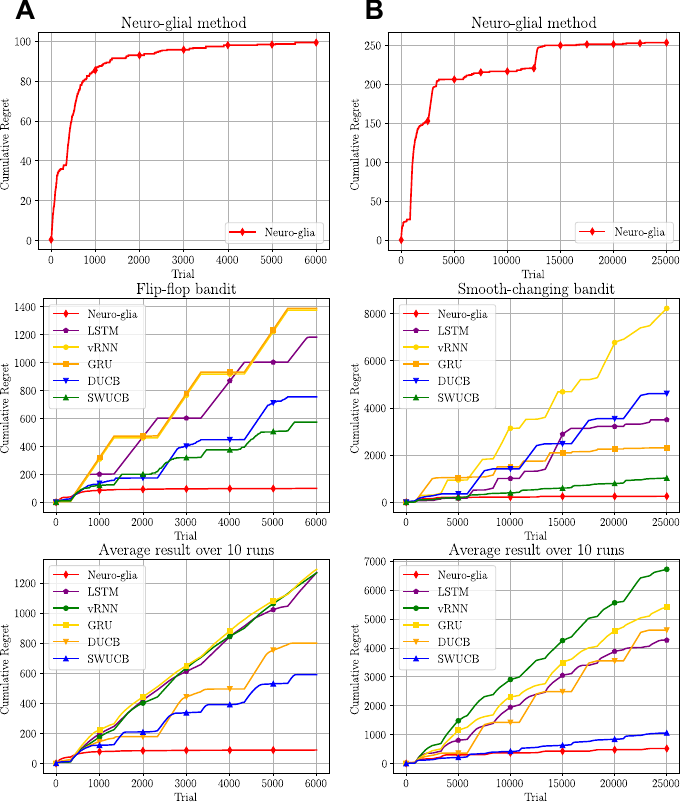}
    \caption{Learning performance in non-stationary Bernoulli bandits.  For flip-flop (panel A) and smooth-changing (panel B) cases, the cumulative regrets of different methods (Neuro-glia, LSTM, vRNN, GRU, DUCB, SWUCB) are shown for the single simulation and the average of $10$ runs. The hyperparameters of DUCB and SWUCB have been carefully tuned to optimize their performance.}
    \label{fig:extra_result_nonstationary_full}
\end{figure}

\subsubsection{Time-scale separation impact}\label{secsi_timescale}
The time-scale parameter has a mild impact on the learning performance for the stationary bandit. When $\tau=0.1$, the average cumulative regret is the smallest. It is also noted that the algorithm becomes more stable as the standard deviation of the final regrets becomes smaller as $\tau$ decreases.
The time-scale separation has important influence on the learning performance for the non-stationary bandit.  If there is no difference between the time-scales, i.e., $\tau=1$, the cumulative regrets will keep increasing and cannot reach a final stationary value for all runs, which means the agent is not able to adapt to the changing environments. When $\tau=0.1$, the agent can  achieve stationary cumulative regrets occasionally over multiple runs; if  $\tau \leq 0.01$, the agent can always achieve stationary asymptotic cumulative regret. 
\begin{figure}[htbp!]
    \centering
    \includegraphics[width=10cm]{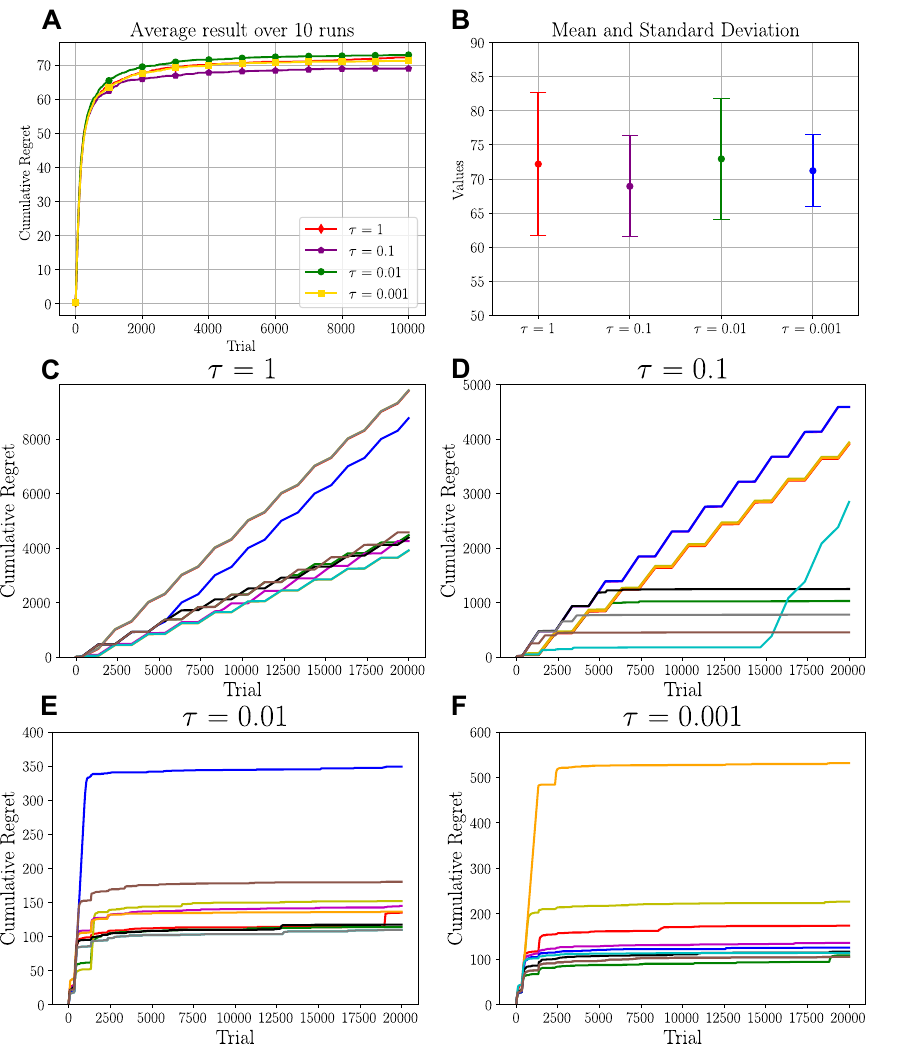}
    \caption{A is the average cumulative regrets over $10$ runs of the neuro-glial method with different $\tau$ in the stationary Bernoulli bandit, while B displays the mean and standard deviation of the asymptotic cumulative regrets. C-F show the cumulative regrets in each individual run of the neuro-glial method with different $\tau$ in the flip-flop non-stationary Bernoulli bandit.}
    \label{sifig:time-scale-impact}
\end{figure}

\subsubsection{Flexible generalization to bandit tasks with different number of actions}\label{secsi-8actions}
In previous demonstrations, we have simulated the bandit tasks with only $3$ actions. Here, we show that the neuro-glial networks and the designed learning algorithm can be easily generalized to other cases by providing an illustrative example involving an $8$-action Bernoulli bandit. 
In the stationary situation, the means for the actions are fixed as $\mu=(0.1,0.2,0.3,0.4,0.6,0.7,0.8,0.9)$. The non-stationary version is designed  with the means changing abruptly from $\mu$ to $1-\mu$ every $5000$ trials. 

To accommodate the $8$ actions of the bandit, the neuro-glial learning algorithm can be modified by simply expanding the dimension of the neuro-glial module's outputs to $8$, while leaving other settings unchanged.
It is validated from simulations that our method can solve these even challenging tasks very well in both stationary and non-stationary situations, and its learning performance is  superior in comparison with other methods (as shown in Figure \ref{sifig:extra_8actions}).

\begin{figure}[htbp!]
    \centering
\includegraphics[width=10cm]{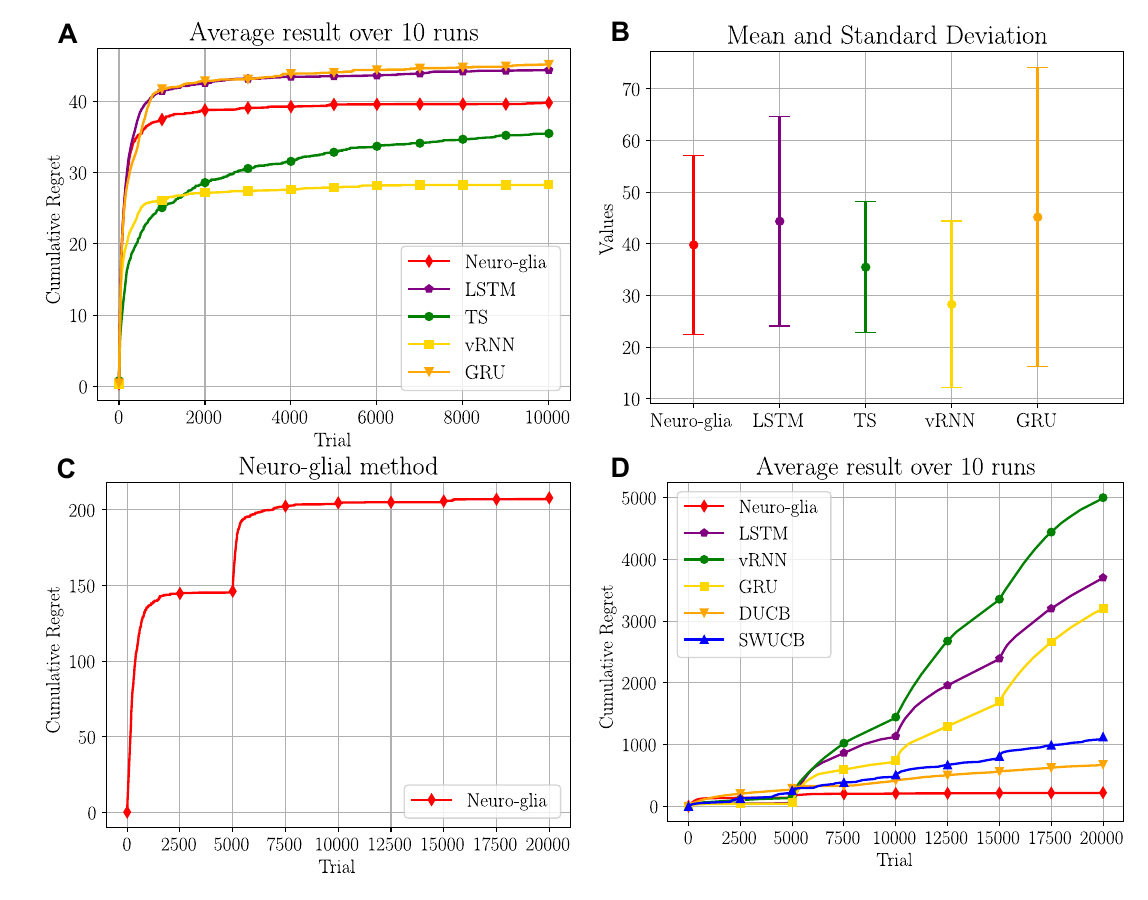}
    \caption{A. The plots show  the average results  and  the means and standard deviations of the asymptotic cumulative regret of different methods in the stationary bandit with $8$ actions.
B. The left plot is the result of a single simulation in this non-stationary bandit, while the right shows the average cumulative results of the neuro-glial method in comparison with other methods (again, DUCB and SWUCB methods have been tuned carefully).}
    \label{sifig:extra_8actions}
\end{figure}

%\bibliography{neuroglial}% common bib file

%%=============================================%%
%% For submissions to Nature Portfolio Journals %%
%% please use the heading ``Extended Data''.   %%
%%=============================================%%

%%=============================================================%%
%% Sample for another appendix section			       %%
%%=============================================================%%

%%===========================================================================================%%
%% If you are submitting to one of the Nature Portfolio journals, using the eJP submission   %%
%% system, please include the references within the manuscript file itself. You may do this  %%
%% by copying the reference list from your .bbl file, paste it into the main manuscript .tex %%
%% file, and delete the associated \verb+\bibliography+ commands.                            %%
%%===========================================================================================%%

\end{document}